\newcommand{\tcb}{\textcolor{black}}
\DeclareMathOperator*{\argmin}{arg\,min}
\DeclarePairedDelimiter\sg{\lceil}{\rfloor}
\newtheorem{thm}{Theorem}
\newtheorem{lemma}{Lemma}
\newtheorem{cor}{Corollary}
\newtheorem{definition}{Definition}
\newtheorem{assumption}{Assumption}
\newtheorem{remark}{Remark}
\newcommand\bj{\mathbf{J}}
\newcommand\re{\mathbb{R}}
\newcommand\fx{\mathcal{F}_{\xi_1, \xi_2}}
\newcommand\hx{\hat{x}}
\newcommand\vt{\varphi(\theta)}
\newcommand*{\QEDB}{\hfill\ensuremath{\square}}%
\def\BibTeX{{\rm B\kern-.05em{\sc i\kern-.025em b}\kern-.08em
    T\kern-.1667em\lower.7ex\hbox{E}\kern-.125emX}}
\begin{document}
\title{Fixed-Time Input-to-State Stability for Singularly Perturbed Systems via Composite Lyapunov Functions\\\vspace{0.2cm} (Extended Manuscript) }

\author{Michael Tang, Miroslav Krsti\'c, Jorge I. Poveda
%\thanks{This paragraph of the first footnote will contain the date on 
%which you submitted your paper for review. It will also contain support 
%information, including sponsor and financial support acknowledgment. For 
%example, ``This work was supported in part by the U.S. Department of 
%Commerce under Grant BS123456.'' }
\thanks{M. Tang and J. I. Poveda are with the Dep. of Electrical and Computer Engineering, University of California, San Diego, La Jolla, CA, USA.}
\thanks{ M. Krsti\'c is with the Dep. of Mechanical and Aerospace Engineering, University of California, San Diego, La Jolla, CA, USA.}
\thanks{This work was supported in part by NSF grants ECCS CAREER 2305756, CMMI 2228791, and AFOSR FA9550-22-1-0211.}
\vspace{-0.6cm}}
\maketitle

\begin{abstract}
We study singularly perturbed systems that exhibit input-to-state stability (ISS) with fixed-time properties in the presence of bounded disturbances. In these systems, solutions converge to the origin within a time frame independent of initial conditions when undisturbed, and to a vicinity of the origin when subjected to bounded disturbances. First, we extend the traditional composite Lyapunov method, commonly applied in singular perturbation theory to analyze asymptotic stability, to include fixed-time ISS. We demonstrate that if both the reduced system and the boundary layer system exhibit fixed-time ISS, and if certain interconnection conditions are met, the entire multi-time scale system retains this fixed-time ISS characteristic, provided the separation of time scales is sufficiently pronounced. Next, we illustrate our findings via analytical and numerical examples, including a novel application in fixed-time feedback optimization for dynamic plants with slowly varying cost functions. 
\end{abstract}

%Despite this relevance, the study of stable multi-agent dynamical systems able to model deception in decision-making algorithms remains largely unexplored.
%Traditionally, equilibrium-seeking problems in competitive settings, such as non-cooperative games, have been studied under the assumption of \emph{symmetric information} among all decision-makers, where each agent has equal access to information for decision-making. However, in realistic competitive settings, some agents often gain privileged information about others' decision rules, breaking this symmetry. Leveraging asymmetric information through systematic \emph{deception} has received significant attention in game theory, cybersecurity, and multi-agent coordination. 
%%%%%%%%%%%%%%%%%%
% \begin{IEEEkeywords}
% Learning in Games, Nash equilibria, Non-cooperative games, Equilibrium Seeking.
% \end{IEEEkeywords}
%%%%%%%%%%%%%%%%%%%
%
\vspace{-0.2cm}
\section{Introduction}
Many complex dynamical systems can be decomposed into components operating on different time scales. Such multi-time-scale systems are common across a diverse array of engineering applications, including aerospace \cite{narang2014nonlinear}, chemical processing \cite{261484}, smart grids \cite{5467283}, and biological and evolutionary systems \cite{del2015biomolecular}, among others. %These types of systems can often be represented in the following form
%
% \begin{subequations}\label{spsys}
%     \begin{align}
%         \dot{x}&=f(x,z,t,\varepsilon)\\
%         \varepsilon\dot{z}&=g(x,z,t,\varepsilon)
%     \end{align}
% \end{subequations}
%
%When $\varepsilon$ is small, it creates a time scale separation between states $x$ and $z$, which are respectively referred to as the ``slow" and ``fast" states.
Such systems can be studied using techniques from singular perturbation theory \cite{vasil1978singular,KokotovicSPBook, khalil}, where the methods typically focus on decomposing the system into lower order subsystems and studying the subsystems to draw conclusions on the behavior of the overall interconnection.
\par
%

%\subsection{Literature Review}
%
One simple yet powerful technique for analyzing singularly perturbed systems is the \emph{composite Lyapunov method}, first introduced by Khalil in \cite{1103586}. This method is particularly effective for assessing system stability, as it leverages the stability of the subsystems to construct a Lyapunov function for the overall system that is valid under sufficiently large time-scale separation. As demonstrated in \cite{1103586}, if the reduced and boundary layer systems admit certain quadratic-type Lyapunov functions, then under additional interconnection conditions, the interconnected system can be shown to be asymptotically stable, provided there is sufficient time-scale separation between the two subsystems. Moreover, if the Lyapunov functions satisfy specific quadratic bounds, these results can be extended to exponential stability. The composite Lyapunov method has found profound applications in control synthesis across various engineering domains, including power systems \cite{9309064}, biological systems \cite{9029355}, aerospace \cite{naidu2001singular}, etc. {While the composite Lyapunov method has been widely used to establish asymptotic and exponential stability, it has found limited applications in settings which require stronger notions of stability. For example, in \cite{9992641}, the authors leverage composite Lyapunov functions to establish finite-time stability for a special class of singularly perturbed homogeneous systems. However, to the best of the authors' knowledge, there does not exist a framework that studies fixed-time stability for a general class of singularly perturbed systems via composite Lyapunov techniques.}

\par

Fixed-time stability, {which has been} popularized through the introduction of Lyapunov conditions in \cite{Fixed_timeTAC}, has garnered significant attention due to its ability to address challenges in control \cite{polyakov2023finite}, optimization \cite{fixed_time}, and learning \cite{nonsmoothesc,poveda2022fixed}. Fixed-time stability ensures convergence to an equilibrium within a fixed time, regardless of the system's initial conditions. While the Lyapunov conditions simplify the verification of fixed-time stability in continuous-time dynamical systems, techniques for analyzing fixed-time stability in interconnected systems remain limited. Current state-of-the-art methods are primarily applicable to systems with specific homogeneity properties \cite{mendoza2023stability}, restrictive structural requirements \cite{lei2022event}, or those {satisfying certain small-gain conditions \cite{10886634}}. {Moreover,} in many practical applications, the systems of interest are also influenced by external disturbances or exogenous inputs. In particular, in this paper we study systems of the form
\begin{subequations}\label{sys}
    \begin{align}
        \dot{x}&=f(x,z,u)\label{sysx}\\
        \varepsilon\dot{z}&=g(x,z,u)\label{sysz},
    \end{align}
\end{subequations}
where $\varepsilon>0$ is a small parameter that induces a time scale separation between the dynamics of $x$ and the dynamics of $z$, and $u$ is an exogenous input. One of the key tools for analyzing such systems is input-to-state stability (ISS), introduced by Sontag in \cite{28018} and later characterized through an equivalent Lyapunov framework in \cite{SONTAG1995351}. In simple terms, an ISS system converges asymptotically to a bounded set for any bounded input signal. A fixed-time counterpart, known as fixed-time ISS (FxT ISS), was introduced in \cite{LOPEZRAMIREZ2020104775} to address fixed-time stability in systems subject to disturbances. While standard ISS properties have been extensively studied in the context of singularly perturbed systems \cite{544001,TeelNesicTAC}, it remains an open question whether analogous results can be developed for fixed-time stability and whether the composite Lyapunov method introduced in \cite{1103586} can be extended to analyze FxT ISS.
\par

In this paper, we address the above questions by extending the popular composite Lyapunov method \cite{1103586} to study FxT ISS in singularly perturbed systems with inputs. Specifically, under the assumptions that the respective ``reduced'' and ``boundary-layer'' dynamics of \eqref{sys} are fixed-time stable on their own, we first derive interconnection conditions under which \eqref{sys} renders the origin FxT ISS. These interconnection conditions parallel those developed in \cite{1103586} for asymptotic and exponential stability. {However, unlike the results of \cite{1103586}, absence of Lipschitz continuity at the origin results in a more complex paradigm for verifying our proposed interconnection conditions, which involve powers of the fixed-time Lyapunov functions of the reduced and boundary-layer dynamics}. Subsequently, we employ our analytical tools to design and analyze a novel class of algorithms that achieve fixed-time feedback optimization under strongly convex cost functions and slowly varying inputs---a class of problems widely studied in the literature~\cite{colombino2019online,hauswirth2020timescale,bianchin2022online}, but which, to the best of our knowledge, has not previously been addressed using algorithms with fixed-time stability properties. Finally, we present numerical and analytical examples to illustrate the key assumptions and the main stability results of the paper.

\par
%

% \vspace{-0.3cm}
% \subsection*{Additional Contributions with respect to \cite{10644358}}
%
Earlier, preliminary results from this work were previously published in the proceedings of the American Control Conference \cite{10644358}. However, the results of \cite{10644358} focused exclusively on systems without inputs, addressing only fixed-time stability, and provided only proof sketches. In contrast, this paper extends the analysis to singularly perturbed systems \emph{with inputs}, advancing the FxT stability results of \cite{10644358} to encompass fixed-time input-to-state stability (FxT ISS). Furthermore, in this paper we present a comprehensive stability analysis, complete proofs, and novel analytical and numerical examples to illustrate the main findings. This includes an application to feedback optimization under slowly varying inputs in general fixed-time stable plants. 

\vspace{0.2cm}
The rest of this paper is organized as follows. Section \ref{sec_prelim} presents the notation and preliminary results. Section \ref{sec_prob} discusses the problem of interest, the main stability results, and showcases the applicability of our results in an illustrative example. Section \ref{sec_ex} studies fixed-time feedback optimization, and Section \ref{sec_conc} ends with the conclusions.
\section{Preliminaries}\label{sec_prelim}
\subsection{Notation}
We use $\mathbb{R}_{\ge 0}$ to denote the set of nonnegative real numbers. We let $\text{sgn}:\mathbb{R}\to \{-1,{0,} 1\}$ denote the sign function, i.e $\text{sgn}(x)=1$ if $x> 0$, $\text{sgn}(x)=-1$ if $x<0$ and $\text{sgn}(0)=0$. For a continuous function $\alpha:\mathbb{R}_{\ge 0}\to\mathbb{R}_{\ge 0}$, we say $\alpha\in\mathcal{K}$ (i.e, $\alpha$ is of class $\mathcal{K}$) if $\alpha(0)=0$ and $\alpha$ is strictly increasing. If $\alpha\in\mathcal{K}$ also satisfies $\lim_{s\to\infty}\alpha(s)=\infty$, we say $\alpha\in\mathcal{K}_\infty$. Given a continuous function $\beta:\mathbb{R}_{\ge 0}\times\mathbb{R}_{\ge 0}\to\mathbb{R}_{\ge 0}$, we say $\beta\in\mathcal{KL}$ if for each $t\ge 0,$ $\beta(\cdot, t)\in\mathcal{K}$ and for each $r\ge 0$, $\beta(r, \cdot)$ is non-increasing and asymptotically goes to 0. Furthermore, $\beta\in\mathcal{GKL}$ if $\beta(\cdot, 0)\in\mathcal{K}$ and for each fixed $r\ge 0$, $\beta(r, \cdot)$ is continuous, non-increasing and there exists {a function $T:\mathbb{R}_{\ge 0}\to\mathbb{R}_{\ge 0}$ such that $\beta(r, t)=0$ for all $t\ge T(r)$}. The mapping $T$ is called {a \emph{settling time function}, which, in general, is not unique}. Given a measurable function $u:\mathbb{R}_{\ge 0}\to\mathbb{R}^p$ we denote $|u|_\infty=\text{ess}\sup_{t\ge 0}|u(t)|$. We use $\mathcal{L}_\infty^p$ to denote the set of measurable functions $u:\mathbb{R}_{\ge 0}\to\mathbb{R}^p$ satisfying $|u|_\infty<\infty$. Given a differentiable function $f:\mathbb{R}^n\to\mathbb{R}^m$, we use $\bj_f(x)\in\mathbb{R}^{m\times n}$ to denote the Jacobian of $f$ evaluated at $x\in\mathbb{R}^n$. If $m=1$, we use $\nabla f(x)=\bj_f(x)^\top$. If $\bj_f(x)$ is continuous, we say $f$ is $\mathcal{C}^1$.
\subsection{Auxiliary Results}
We first present some lemmas that will be instrumental for our results. %Due to space limitations, the proofs can be found in the extended manuscript \cite{tang2024fixed}. 
\begin{lemma}\label{amgm}
    Given $x, y\ge 0$ and $p_1, p_2>0$, the following inequality holds for all {$c>0$}:
    \begin{equation*}
        |x|^{p_1}|y|^{p_2}\le {c}|x|^{p_1+p_2}+{c^{-\frac{p_1}{p_2}}}|y|^{p_1+p_2}.
    \end{equation*}
\end{lemma}\QEDB
% \begin{lemma}\label{sandwich}
%     Given $x\ge 0$ and $a\le b\le c$, then $x^b\le x^a+x^c$.
% \end{lemma}\QEDB
\begin{lemma}\label{bounds}
    For $\xi_1\in(0,1)$ and $\xi_2<0$, the following inequalities hold for all $x,y\in\mathbb{R}^n$:
    \begin{subequations}
        \begin{align}
            \left\lvert \dfrac{x}{|x|^{\xi_1}}-\dfrac{y}{|y|^{\xi_1}} \right\rvert&\le 2^{\xi_1}|x-y|^{1-\xi_1}\label{bdxi1}\\
            \left\lvert \dfrac{x}{|x|^{\xi_2}}-\dfrac{y}{|y|^{\xi_2}} \right\rvert&\le K|y-x|\left(|x|^{-\xi_2}+|y-x|^{-\xi_2}\right)\label{bdxi2}.
        \end{align}
    \end{subequations}
    where $K:=1+\max\{1,-\xi_2 2^{-\xi_2-1}\}$.\QEDB
\end{lemma}
% \begin{lemma}\label{misk}
%     Given $\beta\in\mathcal{KL}$ and some set $K\subset\mathbb{R}_{\ge 0}$, define $M_\beta(r)=\sup_{t\in K}\beta(r, t)$. Then $M_\beta\in\mathcal{K}$.
% \end{lemma}
% \begin{proof}
%     Since $\beta(\cdot, t)$
% \end{proof}
% \begin{lemma}
%     Suppose $\beta\in\mathcal{GKL}$ and the settling time function satisfies $T_0=0$. Then there exists $\tilde{\beta}\in\mathcal{GKL}\cap\mathcal{KL}$ such that $\beta(r,t)\le\tilde{\beta}(r,t)$ for all $r, t\ge 0$
% \end{lemma}
% \begin{proof}
%     Majorize
% \end{proof}
\vspace{0.1cm}
\subsection{Fixed-Time Input-to-State Stability}
Consider a nonlinear dynamical system of the form
\begin{equation}
    \dot{x}=f(x,u),\quad x(0)=x_0\label{xusys},
\end{equation}
where $x\in\mathbb{R}^n$ is the state and $u\in\mathcal{L}_\infty^p$ is an input signal. We assume the vector field $f$ is continuous and satisfies $f(0,0)=0$. 
% Given an initial condition $x_0=x_0$ and input signal $u(t)$ we let $\Phi(t, x_0, u)$ denote the solution of \eqref{xusys} for $t\ge 0$. Since the solution might not be unique, we assume a certain property holds for all solutions generated from a given initial condition if the desired property is satisfied for that initial condition. 
We will state some definitions from \cite{LOPEZRAMIREZ2020104775} that will be particularly relevant for our work.
\begin{definition}\label{fxtissdef}
     System \eqref{xusys} is said to be \emph{fixed-time input-to-state stable (FxT ISS)} if for each $x_0\in\mathbb{R}^n$ and $u\in\mathcal{L}_\infty^p$, every solution $x(t)$ of \eqref{xusys} exists for $t\ge 0$ and satisfies
     \begin{equation}\label{fxtISSbounddef}
         |x(t)|\le \beta(|x_0|, t)+\varrho(|u|_\infty),
     \end{equation}
     where $\beta\in\mathcal{GKL}$, $\varrho\in\mathcal{K}$ and {there exists a settling time function $T$ of $\beta$ that is continuous and uniformly bounded, with $T(0)=0$}.
\end{definition}
\begin{definition}\label{fxtissv}
    A $\mathcal{C}^1$ function $V:\mathbb{R}^n\to\mathbb{R}_{\ge 0}$ is called a \emph{FxT ISS Lyapunov function} for \eqref{xusys} if there exists $\alpha_1, \alpha_2\in\mathcal{K}_\infty$ such that
    \begin{equation}\label{fxtiss_sw}
        \alpha_1(|x|)\le V(x)\le \alpha_2(|x|),
    \end{equation}
    and the following holds
    \begin{equation}
        \dfrac{\partial V}{\partial x}f(x, u)\le -k_1 V^{a_1}{(x)}-k_2 V^{a_2}{(x)}+\rho(|u|)\label{fxtiss_imp},
    \end{equation}
    for some $\rho\in\mathcal{K}_\infty$, $k_1, k_2>0$, $a_1\in (0,1)$ and $a_2>1$.
\end{definition}

\vspace{0.1cm}
{\begin{remark}
    It can be verified that the ``dissipation" formulation we use in \eqref{fxtiss_imp} implies relation (10) in \cite{LOPEZRAMIREZ2020104775}, which implies FxT ISS for \eqref{xusys} via \cite[Thm 4]{LOPEZRAMIREZ2020104775}. Indeed, if \eqref{fxtiss_imp} holds and we fix some $0<\tilde{\varepsilon}<\min_i k_i$, we have
    \begin{equation*}
       |x|>\chi(|u|)\Rightarrow \dfrac{\partial V}{\partial x}f(x, u)\le -\tilde{k}_1 V^{a_1}{(x)}-\tilde{k}_2 V^{a_2}{(x)}, 
    \end{equation*}
    where $\chi(\cdot)=\tilde{\chi}^{-1}\circ\rho(\cdot)$, \tcb{$\tilde{\chi}(\cdot)=\tilde{\varepsilon}\alpha_1^{a_1}(\cdot)+\tilde{\varepsilon}\alpha_1^{a_2}(\cdot)$}, and $\tilde{k}_i=k_i-\tilde{\varepsilon}$. By taking $\tilde{\varepsilon}\to 0$, it can be observed via \cite[Corollary 2]{LOPEZRAMIREZ2020104775} that systems that admit a FxT ISS Lyapunov function that satisfy \eqref{fxtiss_imp} also admit a settling time function satisfying the following bound
    \begin{equation}\label{settime}
        T(x_0)\le \frac{1}{k_1(1-a_1)}+\frac{1}{k_2(a_2-1)},
    \end{equation}
    for all $x_0\in\mathbb{R}^n$.
\end{remark}}
\section{Main Results}\label{sec_prob}
We consider singularly perturbed systems of the form \eqref{sys}, with states $x \in \mathbb{R}^{n}$ and $z \in \mathbb{R}^{m}$, input $u \in \mathbb{R}^{p}$, dynamics satisfying {$f(0,z^*,0) = g(0,z^*,0) = 0$ for some $z^*\in\re^m$, and a small parameter $\varepsilon > 0$ that induces a time scale separation between the dynamics of $x$ and $z$. Our main objective is to exploit the stability properties of the so-called \emph{reduced system} and \emph{boundary-layer system} associated with \eqref{sys}, as defined below, in order to derive Lyapunov-based sufficient conditions that ensure fixed-time input-to-state stability (FxT ISS) of \eqref{sys}, provided the time-scale separation is sufficiently large.}
\subsection{Assumptions}
To study system \eqref{sys} in the context of singular perturbations, we make the following standard assumption on \eqref{sysz}:
\begin{assumption}\label{aqss}
    There exists a $\mathcal{C}^1$ mapping $h:\mathbb{R}^{n}\to \mathbb{R}^{m}$ such that $g(x, z, 0)=0$ if and only if $z=h(x)$.\QEDB
\end{assumption}

\vspace{0.1cm}
The map $h$ is usually referred to as the \emph{quasi-steady state mapping} \cite{khalil} for the disturbance-free system \eqref{sysz}. By using this mapping, we can define the so-called \emph{reduced system} from \eqref{sysx}:
\begin{equation}\label{rsys}
    \dot{x}=f(x, h(x),u).
\end{equation}
{Using the change of coordinates $y=z-h(x)$, we obtain the following error dynamics:}
\begin{subequations}\label{xysys}
    \begin{align}
    \dot{x}&=f(x, y+h(x),u)\\
        \dot{y}&=\dfrac{1}{\varepsilon}g(x,y+h(x),u)-\dfrac{\partial h}{\partial x}f(x,y+h(x),u)\label{dy}.
    \end{align}
    \end{subequations}
%\vspace{-0.3cm}
%{\begin{remark}In the classic singular perturbation literature, if the fast dynamics are unaffected by the input, the state $y$ is usually referred to as an ``error" state. If the fast dynamics depend on the input, the state $y$ can be viewed more generally as a change of variables made to simplify the stability analysis.
%\end{remark}} 
System \eqref{dy} is studied in the time scale $\tau=t/\varepsilon$ and taking $\varepsilon\to 0^+$ to obtain the \emph{boundary layer system}:
\begin{equation}\label{blsys}
    \dfrac{dy}{d\tau}=g(x, y+h(x),u),
\end{equation}
{where $x\in\mathbb{R}^{n}$ is considered fixed.} Since our goal is to study FxT ISS of \eqref{sys}, we make the following FxT ISS Lyapunov-based assumptions on the lower order systems \eqref{rsys} and \eqref{blsys}: 
\begin{assumption}\label{assump_rsys}
     There exists a $\mathcal{C}^1$ function $V:\mathbb{R}^{n}\to\mathbb{R}_{\ge 0}$ and $\alpha_1, \alpha_2, \rho_R\in\mathcal{K}_\infty$ such that
    \begin{align*}
        \alpha_1(|x|)\le V(x)\le \alpha_2(|x|),
    \end{align*}
    and
    \begin{equation*}
        \dfrac{\partial V}{\partial x}f(x, h(x),u)\le -k_1 V^{a_1}{(x)}-k_2 V^{a_2}{(x)}+\rho_R(|u|),
    \end{equation*}
    where $k_1, k_2>0$, $a_1\in (0,1)$ and $a_2>1$.\QEDB
\end{assumption}
\vspace{0.1cm}
\begin{assumption}\label{assump_blsys}
    There exists a $\mathcal{C}^1$ function $W:\mathbb{R}^{n}\times\mathbb{R}^{m}\to\mathbb{R}_{\ge 0}$ and $\tilde{\alpha}_1, \tilde{\alpha}_2, \rho_B\in\mathcal{K}_\infty$ such that
    \begin{align*}
        \tilde{\alpha}_1(|y|)\le W(x,y)\le \tilde{\alpha}_2(|y|),
    \end{align*}
    and
    \begin{small}
    \begin{equation*}
        \dfrac{\partial W}{\partial y}g(x, y+h(x),u)\le -\kappa_1 W^{b_1}{(x,y)}-\kappa_2 W^{b_2}{(x,y)}+\rho_B(|u|),
    \end{equation*}
    \end{small}
    where $\kappa_1, \kappa_2>0$, $b_1\in (0,1)$ and $b_2>1$.\QEDB
\end{assumption}

\vspace{0.1cm}
{Assumptions \ref{assump_rsys} and \ref{assump_blsys} state, respectively, that the reduced system \eqref{rsys}
and the boundary-layer system \eqref{blsys} admit individual FxT ISS Lyapunov functions and are, hence, FxT ISS.} These assumptions mirror the classical Lyapunov-based conditions commonly used in the literature to study the asymptotic stability of singularly perturbed systems. However, as shown in \cite{1103586} and \cite{544001}, the mere stability or ISS properties of the individual reduced and boundary-layer systems are usually insufficient to guarantee that \eqref{sys} is also stable or ISS, and additional interconnection conditions need to be examined. %While for Lipschitz systems such conditions can be readily studied, for fixed-time stable systems the dynamics of interest are never Lipschitz continuous, which makes the analysis challenging.
\subsection{Analysis}
{To assess the FxT stability of system \eqref{sys} using a Lyapunov-based approach, we consider the following Lyapunov function candidate
    \begin{equation}
        \Psi_\zeta(x,y)=\zeta V(x)+(1-\zeta)W(x,y),\quad \zeta\in (0,1),\label{compl}
    \end{equation}
    where $V$ and $W$ are from Assumptions \ref{assump_rsys} and \ref{assump_blsys} respectively. For historical reasons, we refer to \eqref{compl} as a \emph{composite Lyapunov function} \cite{1101342,1104064}. Evaluating the Lie derivative of \eqref{compl} along the trajectories of \eqref{xysys} results in
    \begin{align}
        \dot{\Psi}_\zeta&=\zeta\left(\frac{\partial V}{\partial x}f(x, h(x),u)+ I_1(x,y,u)\right)\notag\\&~~~+(1-\zeta)\left(\frac{1}{\varepsilon}\frac{\partial W}{\partial y}g(x, y+h(x),u)+I_2(x,y,u)\right),\label{compld}
    \end{align}
where the \emph{interconnection terms}, $I_1$ and $I_2$, are given by
\begin{subequations}\label{int}
    \begin{align}
        I_1{(x,y,u)}&=\dfrac{\partial V}{\partial x}\left(f(x,y+h(x),u)-f(x,h(x),u)\right)\\
        I_2{(x,y,u)}&=\left(\dfrac{\partial W}{\partial x}-\dfrac{\partial W}{\partial y}\dfrac{\partial h}{\partial x}\right)f(x,y+h(x),u).
    \end{align}
\end{subequations}}
Ideally, we aim to derive suitable bounds on these terms that would allow us to conclude that \eqref{sys} is fixed-time ISS. Before we do so, we define the following terms:
\begin{subequations}\label{definitionsproof}
\begin{align}
        \tilde{V}(x)&:=V(x)^{\frac{a_1}{2}}+ V(x)^{\frac{a_2}{2}}\\
        \tilde{W}(x, y)&:=W(x,y)^{\frac{b_1}{2}}+ W(x,y)^{\frac{b_2}{2}}\\
        \underline{k}&:=\min_i k_i,~~\underline{\kappa}:=\min_i \kappa_i,
\end{align}
\end{subequations}
where $V, W, a_i,b_i,k_i,\kappa_i$ come from Assumptions \ref{assump_rsys}-\ref{assump_blsys}. 

With these definitions at hand, we can now state the first main result of the paper.
\begin{thm}\label{issthm}
    Consider system \eqref{xysys}, and suppose that Assumptions \ref{aqss}-\ref{assump_blsys} hold. Furthermore, suppose there exists $\nu_1,\nu_2, \omega_1,\omega_2\in\mathbb{R}$ and $\rho_1, \rho_2\in\mathcal{K}_\infty$ such that the interconnection terms in \eqref{int} satisfy 
    \begin{subequations}\label{intcond}
    \begin{align}
        I_1{(x,y,u)}&\le \nu_1 \tilde{V}^2{(x)} +\omega_1 \tilde{W}^2{(x,y)}+\rho_1(|u|)\label{intu1}\\
        I_2{(x,y,u)}&\le \nu_2 \tilde{V}^2{(x)} +\omega_2 \tilde{W}^2{(x,y)}+\rho_2(|u|)\label{intu2}\\
        \nu_1&< \frac12 \underline{k}, \quad\text{or}\quad \nu_2<0\label{vcond}.
    \end{align}
\end{subequations}
Then, there exists $\varepsilon^*>0$ such that, for each $\varepsilon\in(0, \varepsilon^*)$, the system \eqref{xysys} is FxT ISS. \QEDB
\end{thm}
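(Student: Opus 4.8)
The plan is to build a composite FxT ISS Lyapunov function for \eqref{xysys} of the form $\nu(x,y):=(1-d)V(x)+dW(x,y)$, with a weight $d\in(0,1)$ to be fixed at the end, and then invoke the sufficiency result of \cite{LOPEZRAMIREZ2020104775} quoted above. First I would differentiate $\nu$ along \eqref{xysys}: adding and subtracting $f(x,h(x),u)$ inside $\tfrac{\partial V}{\partial x}\dot x$ produces the interconnection term $I_1$, while $\tfrac{\partial W}{\partial x}\dot x+\tfrac{\partial W}{\partial y}\dot y$ equals $I_2+\tfrac1\varepsilon\tfrac{\partial W}{\partial y}g(x,y+h(x),u)$ by \eqref{dy}. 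Using Assumptions \ref{assump_rsys}--\ref{assump_blsys} together with the interconnection bounds \eqref{intu1}--\eqref{intu2}, this gives
\begin{equation*}
\dot\nu\le-(1-d)\underline{k}(V^{a_1}+V^{a_2})-\tfrac d\varepsilon\underline{\kappa}(W^{b_1}+W^{b_2})+\mu_V\tilde V^2+\mu_W\tilde W^2+\Theta(|u|),
\end{equation*}
where $\mu_V:=(1-d)\nu_1+d\nu_2$, $\mu_W:=(1-d)\omega_1+d\omega_2$, and $\Theta:=(1-d)(\rho_R+\rho_1)+d\rho_2+\tfrac d\varepsilon\rho_B\in\mathcal{K}_\infty$. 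Crucially, because the couplings have already been absorbed into pure powers of $V$ and $W$ in \eqref{intu1}--\eqref{intu2}, no $\tilde V\tilde W$ cross-term appears, so the ``slow'' ($V$) and ``fast'' ($W$) estimates decouple.

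Second, I would absorb the indefinite terms $\mu_V\tilde V^2,\mu_W\tilde W^2$ into the dissipation. Young's inequality gives $\tfrac12\tilde V^2\le V^{a_1}+V^{a_2}\le\tilde V^2$ and likewise for $\tilde W$; hence, whatever the sign of $\mu_V$, one has $-(1-d)\underline{k}(V^{a_1}+V^{a_2})+\mu_V\tilde V^2\le-c_V(V^{a_1}+V^{a_2})$ with $c_V>0$ as soon as $\mu_V<\tfrac12(1-d)\underline{k}$, and $-\tfrac d\varepsilon\underline{\kappa}(W^{b_1}+W^{b_2})+\mu_W\tilde W^2\le-c_W(W^{b_1}+W^{b_2})$ with $c_W>0$ as soon as $\varepsilon$ is small enough (the fast dissipation rate $\tfrac d\varepsilon\underline{\kappa}\to\infty$ as $\varepsilon\to0^+$). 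The point is that $d\mapsto(1-d)\nu_1+d\nu_2-\tfrac12(1-d)\underline{k}$ is affine on $[0,1]$, hence it takes a strictly negative value somewhere in $(0,1)$ if and only if one of its endpoint values $\nu_1-\tfrac12\underline{k}$, $\nu_2$ is negative --- which is precisely \eqref{vcond}. Fixing such a $d$, and then $\varepsilon^*>0$ accordingly, yields
\begin{equation*}
\dot\nu\le-c(V^{a_1}+V^{a_2}+W^{b_1}+W^{b_2})+\Theta(|u|),\qquad c:=\min(c_V,c_W)>0,
\end{equation*}
for every $\varepsilon\in(0,\varepsilon^*)$.

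Third, I would convert the right-hand side into powers of $\nu$ alone. Since $\nu$ is a convex combination of $V$ and $W$, $\max(V,W)\ge\nu$; assuming w.l.o.g. $V\ge\nu$, monotonicity gives $V^{a_1}+V^{a_2}\ge\nu^{a_1}+\nu^{a_2}$, and Lemma \ref{sandwich} applied to $\nu$ with the exponents $A:=\max(a_1,b_1)\in(0,1)$ and $B:=\min(a_2,b_2)>1$ (both lying in $[a_1,a_2]$) gives $\nu^{a_1}+\nu^{a_2}\ge\tfrac12(\nu^A+\nu^B)$; the case $W\ge\nu$ is symmetric using $[b_1,b_2]$. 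Hence $\dot\nu\le-\tfrac c2\nu^A-\tfrac c2\nu^B+\Theta(|u|)$ with $A\in(0,1)$, $B>1$, $\Theta\in\mathcal{K}_\infty$. Finally, combining the sandwich bounds of Assumptions \ref{assump_rsys}--\ref{assump_blsys} yields $\hat\alpha_1(|(x,y)|)\le\nu(x,y)\le\hat\alpha_2(|(x,y)|)$ for suitable $\hat\alpha_1,\hat\alpha_2\in\mathcal{K}_\infty$, so $\nu$ is a FxT ISS Lyapunov function for \eqref{xysys} (forward completeness being automatic, as $\dot\nu<0$ whenever $\nu$ is large relative to $|u|_\infty$), and FxT ISS of \eqref{xysys} follows from \cite{LOPEZRAMIREZ2020104775}.

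The main obstacle is the second step. The indefiniteness of $\mu_V,\mu_W$ (the $\nu_i,\omega_i$ can have any sign) forces a careful case split, and the loss factor $2$ in $\tfrac12\tilde V^2\le V^{a_1}+V^{a_2}$ --- an artifact of the non-Lipschitz dynamics, which makes the interconnection terms scale like $\tilde V^2,\tilde W^2$ rather than like $V,W$ as in the exponential case of \cite{1103586} --- is exactly what produces the threshold $\tfrac12\underline{k}$ appearing in \eqref{vcond}; the affine-in-$d$ argument then shows that this threshold is the sharp condition for an admissible weight $d$ to exist.
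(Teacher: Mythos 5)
Your proposal is correct and follows essentially the same route as the paper: a convex-combination composite Lyapunov function, absorption of the interconnection bounds \eqref{intu1}--\eqref{intu2} with the weight chosen via the affine-in-the-weight condition equivalent to \eqref{vcond}, a subsequent choice of $\varepsilon^*$ for the fast dissipation, and a final reduction to powers of the composite function to invoke the FxT ISS Lyapunov characterization of \cite{LOPEZRAMIREZ2020104775}. The only (harmless) deviation is cosmetic: in the last step you dominate $\nu^A+\nu^B$ via the case split $\max(V,W)\ge\nu$ together with Lemma \ref{sandwich}, whereas the paper duplicates the dissipation terms and uses subadditivity/convexity of $s\mapsto s^{\gamma_i}$ applied to $V+W\ge\Psi_{\zeta^*}$.
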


\vspace{0.1cm}
\begin{proof}
    Continuing from \eqref{compld}, we can use Assumptions \ref{assump_rsys} and \ref{assump_blsys} to obtain:
    \begin{align*}
        \dot{\Psi}_\zeta&\le -\zeta\underline{k}(V^{a_1}{(x)}+V^{a_2}{(x)})+\zeta\rho_R(|u|)+\zeta I_1+(1-\zeta)I_2\\&~~~-\frac{1-\zeta}{\varepsilon}\underline{\kappa}(W^{b_1}{(x,y)}+W^{b_2}{(x,y)})+\frac{1-\zeta}{\varepsilon}\rho_B(|u|)\\
        &\le -\dfrac{\zeta}{2} \underline{k}\tilde{V}^2{(x)}-\dfrac{1-\zeta}{2\varepsilon}\underline{\kappa}\tilde{W}^2{(x,y)}+\zeta I_1+(1-\zeta)I_2\\&~~~+\zeta \rho_R(|u|)+\frac{1-\zeta}{\varepsilon}\rho_B(|u|),
    \end{align*}
    where we used Jensen's inequality to obtain $2(V^{a_1}+V^{a_2})\geq\tilde{V}^2$ and $2(W^{b_1}+W^{b_2})\geq\tilde{W}^2$, and for simplicity, we omit the arguments of $I_1$ and $I_2$.
    With conditions \eqref{intu1} and \eqref{intu2}, we have:
    \begin{align*}
        \dot{\Psi}_\zeta
        &\le-\nu(\zeta)\tilde{V}^2{(x)}-\omega_\varepsilon(\zeta)\tilde{W}^2{(x,y)}+\rho_{\varepsilon, \zeta}(|u|),
    \end{align*}
    where
    \begin{subequations}
        \begin{align*}
\nu(\zeta)&=\zeta\left(\dfrac12\underline{k}-\nu_1\right)-(1-\zeta)\nu_2\\
\omega_\varepsilon(\zeta)&=\dfrac{1-\zeta}{2\varepsilon}\underline{\kappa}-\zeta\omega_1-(1-\zeta)\omega_2\\
\rho_{\varepsilon, \zeta}(s)&=\zeta\left(\rho_R(s)+\rho_1(s)\right)+(1-\zeta)\left(\frac{\rho_B(s)}{\varepsilon}+\rho_2(s)\right),
        \end{align*}
    \end{subequations}
    and we clearly have $\rho_{\varepsilon, \zeta}\in\mathcal{K}_\infty$ for $\zeta\in(0,1)$ and $\varepsilon>0$. By \eqref{vcond} we can find $\zeta^*\in(0,1)$ such that $\nu^*:=\nu(\zeta^*)>0$, and then we can find $\varepsilon^*>0$ such that $\omega_\varepsilon(\zeta^*)>\nu^*$ whenever $\varepsilon\in(0,\varepsilon^*)$. For these $\varepsilon\in(0, \varepsilon^*)$, we obtain:
    \begin{align*}
        \dot{\Psi}_{\zeta^*}&\le -\nu^*\left(\tilde{V}^2{(x)}+\tilde{W}^2{(x,y)}\right)+\rho_{\varepsilon, \zeta^*}(|u|)\\&\le
        -\nu^*\left(V^{a_1}{(x)}+V^{a_2}{(x)}+W^{b_1}{(x,y)}+W^{b_2}{(x,y)}\right)\\&~~~+\rho_{\varepsilon, \zeta^*}(|u|)\\
        &=-\dfrac{\nu^*}{2}\bigl((V^{a_1}{(x)}+V^{a_2}{(x)})+(V^{a_1}{(x)}+V^{a_2}{(x)})\\&~~~+(W^{b_1}{(x,y)}+W^{b_2}{(x,y)})+(W^{b_1}{(x,y)}+W^{b_2}{(x,y)})\bigl)\\&~~~+\rho_{\varepsilon, \zeta^*}(|u|),
    \end{align*}\noindent 
    where the second inequality follows by the fact that $(a+b)^2\geq a^2+b^2$ for all $a,b\geq0$. We pick $\gamma_1\in \left[\max\left\{a_1, b_1\right\},1\right)$ and $\gamma_2\in\left(1,\min\left\{a_2, b_2\right\} \right]$ to obtain the following inequality via Lemmas \ref{jensenlemma}-\ref{lem_sandw} in the Appendix:
    \begin{align*}
        \dot{\Psi}_{\zeta^*}&\le -\dfrac{\nu^*}{2}\left(V^{\gamma_1}{(x)}+V^{\gamma_2}{(x)}+W^{\gamma_1}{(x,y)}+W^{\gamma_2}{(x,y)}\right)\\&~~~+\rho_{\varepsilon, \zeta^*}(|u|)\\
        &\le -\dfrac{\nu^*}{2}\left((V{(x)}+W{(x,y)})^{\gamma_1}+2^{1-\gamma_2}(V{(x)}+W{(x,y)})^{\gamma_2}\right)\\&~~~+\rho_{\varepsilon, \zeta^*}(|u|)\\
        &\le -\dfrac{\nu^*}{2}\left(\Psi_{\zeta^*}^{\gamma_1}+2^{1-\gamma_2}\Psi_{\zeta^*}^{\gamma_2}\right)+\rho_{\varepsilon, \zeta^*}(|u|).
    \end{align*}
    Hence, \eqref{xysys} is FxT ISS for $\varepsilon\in(0, \varepsilon^*)$.
\end{proof}
\vspace{0.1cm}
% \begin{remark}
%     The case where the right hand sides of \eqref{intu1} and \eqref{intu2} also contain a $c_i\tilde{V}\tilde{W}$ term can be reduced to \eqref{intu1} and \eqref{intu2} by applying Lemma \ref{amgm}. \QEDB 
% \end{remark}
{ We would like to note that the ``or" condition in \eqref{vcond} is non-exclusive, \tcb{i.e.,} it is acceptable for both conditions to be satisfied. In this case, any choice of the weight $\zeta\in (0,1)$ will result in a valid composite Lyapunov function candidate.
\begin{remark}
    The functions $\tilde{V}(x)$ and $\tilde{W}(x,y)$ are highly analogous to the functions $\psi_1(y)$ and $\psi_2(y)$, respectively, from \cite[Chapter 11.5]{khalil}. For the asymptotic stability (resp. FxT ISS) result from \cite[Theorem 11.3]{khalil} (resp. Theorem \ref{issthm} in this paper), the Lie derivatives of $V$ and $W$ along the reduced and boundary layer dynamics are assumed to be upper bounded by negative multiples of $\psi_1^2(x)$ and $\psi_2^2(y)$ (resp. $\tilde{V}^2(x)$ and $\tilde{W}^2(x,y)$), respectively. Hence, we can observe that the structure of the interconnection conditions in our paper are actually more forgiving than those from \cite[Chapter 11.5]{khalil} in the sense that we allow for extra $\tilde{V}^2(x)$ and $\tilde{W}^2(x,y)$ terms. This extra degree of freedom is particularly useful since it allows us to apply our results to a variety of interesting \tcb{interconnected} systems, such as those presented later in this paper. However, our conditions are also more restrictive in the sense that we now require very specific forms for the expressions $\psi_1$ and $\psi_2$. But this is expected, since Theorem \ref{issthm} considers fixed-time ISS, which is a much stronger notion of stability. \hfill $\QEDB$
\end{remark}
}

\vspace{0.1cm}
We have shown that under suitable assumptions, the inequalities in \eqref{intcond} imply that system \eqref{xysys} is FxT ISS provided there is a sufficiently large time scale separation. {The proof of Theorem \ref{issthm} provides an efficient methodology for computing a somewhat conservative estimate of the required timescale separation, \tcb{i.e.,} $\varepsilon^*$, for FxT ISS. It can also be seen from the proof of Theorem \ref{issthm} that if the fast dynamics have no input, then the derived gain $\varrho$ in the FxT ISS bound \eqref{fxtISSbounddef} can be made independent of $\varepsilon$. This is further detailed in the following Corollary.
\begin{cor}\label{cor_unif}
    Consider \eqref{xysys} and suppose the conditions of Theorem \ref{issthm} are satisfied, but with $\rho_B\equiv 0$ and the vector field $g$ is independent of $u$. Then, there exists $\beta\in\mathcal{GKL}$, $\varrho\in\mathcal{K}$ and $\varepsilon^*>0$ such that the following holds 
    % \begin{equation}\label{cgklbd}
    %     \left\lvert\begin{bmatrix}
    %         x(t)\\ y(t)
    %     \end{bmatrix}\right\rvert\le\beta\left(\left\lvert\begin{bmatrix}
    %         x_0\\ y_0
    %     \end{bmatrix}\right\rvert,t\right)+\varrho(|u|_\infty),
    % \end{equation}
    \begin{equation}\label{cgklbd}
        \left\lvert s(t)\right\rvert\le\beta\left(\left\lvert s_0\right\rvert,t\right)+\varrho(|u|_\infty),
    \end{equation}
    for all $t\ge 0$, $u\in\mathcal{L}_\infty^p$, and $\varepsilon\in (0, \varepsilon^*)$, where $s(t):=[x(t), y(t)]^\top$.
\end{cor}
\begin{proof}
    % With the composite Lyapunov function candidate \eqref{compl}, we can follow the same steps in the proof of Theorem \ref{issthm} and arrive at $\dot{\Psi}_\zeta
    %     \le-\nu(\zeta)\tilde{V}^2{(x)}-\omega_\varepsilon(\zeta)\tilde{W}^2{(x,y)}+\tilde{\rho}_{\zeta}(|u|),$
    % with $\tilde{\rho}_{\zeta}(|u|)=\zeta\left(\rho_R(|u|)+\rho_1(|u|)\right)+(1-\zeta)\rho_2(|u|).$
    % We can find $\zeta^*\in(0,1)$ such that $\nu^*:=\nu(\zeta^*)>0$, and then we can find $\varepsilon^*>0$ such that $\omega_\varepsilon(\zeta^*)>\nu^*$ whenever $\varepsilon\in(0,\varepsilon^*)$. For these $\varepsilon\in(0, \varepsilon^*)$ we obtain $\dot{\Psi}_{\zeta^*}\le -\dfrac{\nu^*}{2}\left(\Psi_{\zeta^*}^{\gamma_1}+2^{1-\gamma_2}\Psi_{\zeta^*}^{\gamma_2}\right)+\tilde{\rho}_{\zeta^*}(|u|),$ where $\gamma_1\in \left[\max\left\{a_1, b_1\right\},1\right)$ and $\gamma_2\in\left(1,\min\left\{a_2, b_2\right\} \right]$. Moreover, we observe that the upper bound on $\dot{\Psi}_{\zeta^*}$ is independent of $\varepsilon$, which establishes the result.
    We follow the same steps from the proof of Theorem \ref{issthm}, but with $\tilde{\rho}_{\zeta}(s)=\zeta\left(\rho_R(s)+\rho_1(s)\right)+(1-\zeta)\rho_2(s)$ instead of $\rho_{\varepsilon, \zeta}$. This results in an upper bound on $\dot{\Psi}_{\zeta^*}$ that holds uniformly for $\varepsilon\in (0, \varepsilon^*)$.
\end{proof}
}
\vspace{0.1cm}

{While the results of Theorem \ref{issthm} and Corollary \ref{cor_unif} hold for system \eqref{xysys},} the question of whether \eqref{sys} is FxT ISS may also be of interest. In other words: we ask if the transformation $y=z-h(x)$ preserves the FxT ISS property. Fortunately, as long as the \tcb{quasi-steady} state map satisfies a mild boundedness assumption, \eqref{xysys} being FxT ISS implies \eqref{sys} is FxT ISS. This is further detailed in the following result.
\begin{thm}\label{zyeq}
    Suppose \eqref{xysys} is FxT ISS and $|h(x)|\le \tilde{\alpha}(|x|)$ for some $\tilde{\alpha}\in\mathcal{K}$, then \eqref{sys} is also fixed time ISS. 
    % Furthermore, if $\Psi(x,y)$ is a FxT ISS Lyapunov function for \eqref{xysys}, then $\tilde{\Psi}(x,z):=\Psi(x, z-h(x))$ is a FxT ISS Lyapunov function for \eqref{sys}. 
    \QEDB 
\end{thm}
\begin{proof}
    Since \eqref{xysys} is FxT ISS, there exists $\beta\in\mathcal{GKL}$ and $\varrho\in\mathcal{K}$ such that
    \begin{equation}\label{gklbd}
        \left\lvert\begin{bmatrix}
            x(t)\\ z(t)-h(x(t))
        \end{bmatrix}\right\rvert\le\beta\left(\left\lvert\begin{bmatrix}
            x_0\\ y_0
        \end{bmatrix}\right\rvert,t\right)+\varrho(|u|_\infty),
    \end{equation}
    where $\beta(r,t)=0$ for each $t>T(r)$, and $T$ is a continuous function that satisfies $\sup_{r\ge 0} T(r)<\infty$ and $T(0)=0$. Without loss of generality, we can assume $\beta(\cdot, t)$ is non-decreasing for each $t\ge 0$. Note that \eqref{gklbd} implies
    \begin{subequations}\label{xzkl}
        \begin{align}
            |x(t)|&\le \beta\left(\left\lvert\begin{bmatrix}
            x_0\\ y_0
        \end{bmatrix}\right\rvert,t\right)+\varrho(|u|_\infty)\\
        |z(t)|&\le \beta\left(\left\lvert\begin{bmatrix}
            x_0\\ y_0
        \end{bmatrix}\right\rvert,t\right)+\varrho(|u|_\infty)+|h(x(t))|.
        \end{align}
    \end{subequations}
    But we also have
    \begin{align*}
        \beta\left(\left\lvert\begin{bmatrix}
            x_0\\ y_0
        \end{bmatrix}\right\rvert,t\right)=\beta\left(\left\lvert\begin{bmatrix}
            x_0\\ z_0-h(x_0)
        \end{bmatrix}\right\rvert,t\right)\le\overline{\beta}\left(\left\lvert\begin{bmatrix}
            x_0\\ z_0
        \end{bmatrix}\right\rvert,t\right),
    \end{align*}
    where 
    $\overline{\beta}(r,t):=\beta(2r,t)+\beta(2\tilde{\alpha}(r),t)\in\mathcal{GKL}$. Moreover,
    \begin{align}
        |h(x(t))|\le \tilde{\alpha}(|x(t)|)\le\tilde{\beta}\left(\left\lvert\begin{bmatrix}
            x_0\\ z_0
        \end{bmatrix}\right\rvert,t\right)+\tilde{\varrho}(|u|_\infty),\label{hbd}
    \end{align}
    where $\tilde{\beta}(r,t):=\tilde{\alpha}\left(2\overline{\beta}\left(r,t\right)\right)\in\mathcal{GKL}$, and 
            $\tilde{\varrho}(\cdot):=\tilde{\alpha}(2{\varrho}(\cdot))\in\mathcal{K}$. Combining \eqref{xzkl} and \eqref{hbd} yields the result.
\end{proof}
\subsection{A Stylized Example}
To illustrate our results, we first consider a singularly perturbed system with scalar reduced and boundary layer systems. In particular, consider the plant
\begin{subequations}\label{ex}
    \begin{align}
        \dot{x}&=-\lceil z\rfloor^{r_1}-\lceil z\rfloor^{r_2}+u_1\\
        \varepsilon \dot{z}&=-\lceil z-x-u_1\rfloor^{q_1}-\lceil z-x-u_2\rfloor^{q_2}+u_1 u_2,
    \end{align}
\end{subequations}
where $\lceil \cdot\rfloor^{q}:=|\cdot|^q\text{sgn}(\cdot)$,  $x, z, u_1, u_2\in\mathbb{R}$, $0<q_1\le r_1<1$ and $1<r_2\le q_2$. System \eqref{ex} has the \tcb{quasi-steady} state $h(x)=x$, with reduced system:
\begin{equation}\label{exred}
    \dot{x}=-\lceil x\rfloor^{r_1}-\lceil x\rfloor^{r_2}+u_1,
\end{equation}
and boundary layer system
\begin{equation}\label{exbl}
    \dfrac{dy}{d\tau}=-\lceil y-u_1\rfloor^{q_1}-\lceil y-u_2\rfloor^{q_2}+u_1 u_2.
\end{equation}
Furthermore, it is easy to see that $h$ satisfies the class $\mathcal{K}$ bound assumption from Theorem \ref{zyeq}. To verify that \eqref{exred} and \eqref{exbl} satisfy Assumptions \ref{assump_rsys} and \ref{assump_blsys}, we will use the Lyapunov functions $V(x)=x^2$ and $W(y)=y^2$. For the reduced system \eqref{exred} we have:
\begin{align}
    \dfrac{\partial V}{\partial x}f(x, h(x), u)
    &\le -V^{{\tilde{r}_1}}{(x)}-V^{{\tilde{r}_2}}{(x)}+|u|^2,
    \label{toyr}
\end{align}
{where $\tilde{r}_i:=\frac12(r_i+1)$.}
Thus, Assumption \ref{assump_rsys} is satisfied.
% To proceed with the boundary layer dynamics, we first establish the following useful lemma:
% \begin{lemma}\label{exlem}
%     Given $\alpha>0$ and $x, u\in\mathbb{R}$ the following holds for $|x|> 2|u|$:
%     \begin{equation*}
%         x\lceil x+u \rfloor^\alpha> 2^{-\alpha}|x|^{\alpha+1}.
%     \end{equation*}
% \end{lemma}
% \vspace{0.1cm}
% \begin{proof}
%     First assume $x<0$, which implies $|u|<-\frac12 x$. Since $\lceil x \rfloor^\alpha$ is strictly increasing in $x$, we have
%     \begin{align*}
%         \lceil x+u \rfloor^\alpha< \left\lceil x-\frac12 x \right\rfloor^\alpha=2^{-\alpha}\lceil x \rfloor^\alpha.
%     \end{align*}
%     Multiplying by $x$ yields the result. Now, suppose $x>0$, which implies $-|u|>-\frac12 x$. Then we have
%     \begin{align*}
%         \lceil x+u\rfloor^\alpha\ge\lceil x-|u|\rfloor^\alpha>\left\lceil x-\frac12 x\right\rfloor^\alpha=2^{-\alpha}\lceil x\rfloor^\alpha.
%     \end{align*}
%     \vspace{0.1cm}
%     We again multiply by $x$ to obtain the result.
% \end{proof}

% \vspace{0.1cm}
Similarly, we can differentiate $W$ along the trajectories of \eqref{exbl} to obtain the following:
\begin{align*}
    \dot{W}
     &\le -2^{1-q_1}|y|^{q_1+1}-2^{1-q_2}|y|^{q_2+1}+2yu_1 u_2\\
     &\le -(2^{1-q_1}-2^{-q_2})|y|^{q_1+1}-2^{-q_2}|y|^{q_2+1}+2^{q_2}|u|^4\\
    &\le -(2^{-q_1}-2^{-1-q_2})W^{{\tilde{q}_1}}{(y)}-2^{-1-q_2}W^{{\tilde{q}_2}}{(y)},\\&~~~\quad \forall~|y|>\max\{{\varrho}_B^{-1}(2^{q_2}|u|^4), 2|u|\},
\end{align*}
where we used Lemma \ref{lemma_ex0} to obtain the first inequality, as well as $2yu_1u_2\leq c|y|^2+\frac{1}{c}|u_1u_2|^2$ (with $c=2^{-q_2}$) and $|u_i|\leq|u|$ to obtain the second inequality, and
\begin{equation}
-2^{-q_2}|y|^{q_1+1}-2^{-q_2}|y|^{q_2+1}+2^{-q_2}|y|^2\leq0
\end{equation}
to obtain the third inequality with $\tilde{q}_i=\frac12(q_i+1)$ and
\begin{equation*}
    \varrho_B(s):=(2^{-q_1}-2^{-1-q_2})|s|^{q_1+1}+2^{-1-q_2}|s|^{q_2+1}.
\end{equation*}
This implies that system \eqref{exbl} is FxT ISS uniformly in $x$, and hence the assumptions are satisfied. To verify if system \eqref{ex} is FxT ISS, it remains to check if the interconnection terms associated with system \eqref{ex} satisfy the interconnection conditions \eqref{intcond}. Let $\tilde{V}{(x)}=|x|^{{\tilde{r}_1}}+|x|^{{\tilde{r}_2}}$ and $\tilde{W}{(y)}=|y|^{{\tilde{q}_1}}+|y|^{{\tilde{q}_2}}$. Then, we can compute $I_1$:
\begin{align*}
    I_1&=2x\Bigl(-\lceil y+x\rfloor^{r_1}-\lceil y+x\rfloor^{r_2}+\lceil x\rfloor^{r_1}+\lceil x\rfloor^{r_2}\Bigl).
    % &\le 2|x|\left(2^{1-r_1} |y|^{r_1}+K_{1-r_2}\left(|y|^{r_2}+|y||x|^{r_2-1}\right)\right)\\
    % &\le \dfrac{1}{\underline{\alpha}_1(c)}\left(|x|^{r_1+1}+|x|^{r_2+1}\right)+\overline{\alpha}_1(c)\left(|y|^{r_1+1}+|y|^{r_2+1}\right)\\   
    % &\le c\tilde{V}^2{(x)}+2c^{-\sigma_1}\tilde{W}^2{(y)},
\end{align*}
By Lemma \ref{bounds}, we have
\begin{equation}\label{ex0_i1bd1}
    \left\lvert \sg{x}^{r_1}-\sg{y+x}^{r_1}\right\rvert\le 2^{1-r_1}|y|^{r_1},
\end{equation}
and
\begin{equation}\label{ex0_i1bd2}
    \left\lvert \sg{x}^{r_2}-\sg{y+x}^{r_2}\right\rvert\le K|y|\left(|x|^{r_2-1}+|y|^{r_2-1}\right),
\end{equation}
where $K=1+\max\{1, (r_2-1)2^{r_2-2}\}$ and we use the fact that $\sg{x}^r=\frac{x}{|x|^{1-r}}$ for each $r>0$. With \eqref{ex0_i1bd1} and \eqref{ex0_i1bd2}, we can use the triangle inequality to bound $I_1$ as follows
\begin{align}
    I_1&\le 2|x|\left(2^{1-r_1}|y|^{r_1}+K|y|\left(|x|^{r_2-1}+|y|^{r_2-1}\right)\right)\notag\\
    &=2^{2-r_1}|x||y|^{r_1}+2K|y||x|^{r_2}+2K|x||y|^{r_2}.\label{ex0_i1bdxy}
\end{align}
By Lemma \ref{amgm}, we also have that the following inequalities hold
\begin{subequations}\label{ex0amgm}
\begin{align}
    2^{2-r_1}|x||y|^{r_1}&\le 2^{2-r_1}\left(\hat{c}|x|^{r_1+1}+\hat{c}^{-\frac{1}{r_1}}|y|^{r_1+1}\right)\\
    2K|y||x|^{r_2}&\le 2K\left(\hat{c}|x|^{r_2+1}+\hat{c}^{-r_2}|y|^{r_2+1}\right)\\
    2K|x||y|^{r_2}&\le 2K\left(\hat{c}|x|^{r_2+1}+\hat{c}^{-\frac{1}{r_2}}|y|^{r_2+1}\right)
\end{align}
\end{subequations}
for all $\hat{c}>0$. Since $2\tilde{q}_1\le 2\tilde{r}_1\leq r_i+1\le 2\tilde{r}_2\le 2\tilde{q}_2$ for $i=1,2$, we can use Lemma \ref{lem_sandw} in the Appendix (and the fact that $(a+b)^2\geq a^2+b^2,~\forall~a,b\geq0$) to obtain the following
\begin{subequations}\label{ex0_i1lft}
\begin{align}
    |x|^{r_i+1}\le |x|^{2\tilde{r}_1}+|x|^{2\tilde{r}_2}&\le \tilde{V}^2(x)\\
    |y|^{r_i+1}\le |y|^{2\tilde{q}_1}+|y|^{2\tilde{q}_2}&\le \tilde{W}^2(y)
\end{align}
\end{subequations}
for $i=1,2$. For fixed $\hat{c}>0$, let $c=(2^{2-r_1}+4K)\hat{c}$, and we define the function $\overline{\alpha}:(0,\infty)\to (0, \infty)$ by
\begin{align*}
    \overline{\alpha}(s)=2^{2-r_1}s^{-\frac{1}{r_1}}+2K(s^{-r_2}+s^{-\frac{1}{r_2}}).
\end{align*}
We let $\tcb{\sigma_1:=\max\{\frac{1}{r_1}, r_2\}}$ and notice that for all $s\in (0,1)$, the function $\overline{\alpha}$ satisfies the following bound
\begin{equation}\label{boundsigmastyel}
    \overline{\alpha}(s)\le (2^{2-r_1}+4K)s^{-\sigma_1}.
\end{equation}
By combining \eqref{ex0_i1bdxy}, \eqref{ex0amgm}, \eqref{ex0_i1lft}, and \eqref{boundsigmastyel} we obtain
\begin{align*}
    I_1&\le c \tilde{V}^2(x)+\overline{\alpha}(\hat{c})\tilde{W}^2(y)\\
    &=c \tilde{V}^2(x)+\overline{\alpha}\left(\frac{c}{2^{2-r_1}+4K}\right)\tilde{W}^2(y)\\
    &\le c \tilde{V}^2(x)+\tcb{(2^{2-r_1}+4K)^{1+\sigma_1}}c^{-\sigma_1}\tilde{W}^2(y)
\end{align*}
for all $c\in (0,\tcb{2^{2-r_1}+4K})$. By taking $c>0$ sufficiently small we can satisfy conditions \eqref{intu1} and \eqref{vcond}. Thus, it remains to check if $I_2$ also satisfies \eqref{intu2}. Indeed, by computing $I_2$ we obtain:
\begin{align*}
    I_2&=2y\left(\lceil y+x\rfloor ^{r_1}+\lceil y+x\rfloor ^{r_2}-u_1\right)\\
    &\le 2|y|\left(|y+x|^{r_1}+|y+x|^{r_2}+|u_1|\right)\\
    &\le 2|y|\left(|y|^{r_1}+|x|^{r_1}+2^{r_2-1}(|y|^{r_2}+|x|^{r_2})+|u_1|\right)\\
    &=2(|y|^{r_1+1}+|y||x|^{r_1}+|y||u_1|)+2^{r_2}(|y|^{r_2+1}+|y||x|^{r_2})
    % &\le 
    % 2|y|\left(|y|^{r_1}+|x|^{r_1}+2^{r_2-1}\left(|y|^{r_2}+|x|^{r_2}\right)+|u|\right)\\
    % &\le 
    % \dfrac{1}{\underline{\alpha}_2(c)}\left(|x|^{r_1+1}+|x|^{r_2+1}\right)+\overline{\alpha}_2(c)\left(|y|^{r_1+1}+|y|^{r_2+1}\right)\\&~~~+2|y|^{r_1+1}+2^{r_2}|y|^{r_2+1}+2|y||u|\\
    % \\&\le c\tilde{V}^2{(x)}+(2c^{-\sigma_2}+3+2^{r_2})\tilde{W}^2{(y)}+|u|^2,
\end{align*}
By Lemma \ref{amgm}, we have
\begin{align*}
    |y||x|^{r_1}&\le |x|^{r_1+1}+|y|^{r_1+1}\\
    |y||x|^{r_2}&\le |x|^{r_2+1}+|y|^{r_2+1}.
\end{align*}
Then, we can leverage \eqref{ex0_i1lft} to obtain
\begin{align*}
    I_2&\le 4|y|^{r_1+1}+2|x|^{r_1+1}+|y|^2+|u|^2+2^{r_2+1}|y|^{r_2+1}\\&~~~+2^{r_2}|x|^{r_2+1}\\
    &\le \tcb{(2+2^{r_2})}\tilde{V}^2(x)+\tcb{(5+2^{r_2+1})}\tilde{W}^2(y)+|u|^2.
\end{align*}
We conclude that the conditions of Theorem \ref{issthm} and \ref{zyeq} are satisfied, and hence there exists $\varepsilon^*$ such that the singularly perturbed system \eqref{ex} is FxT ISS for $\varepsilon\in (0, \varepsilon^*)$. We simulate the system using $r_1=\frac25, r_2=\frac65, q_1=\frac13, q_2=\frac97$ and the disturbances $u_1(t)=e^{\sin{t}}, u_2(t)= \sin(19\log(t+1))-0.21$, {where we clearly have $u(t)=[u_1(t), u_2(t)]^\top \in \mathcal{L}_\infty^2$. The trajectories of this system with and without the input are shown in Figure \ref{fxt1plot}, illustrating the FxT ISS property}
\begin{figure}[t!]
  \centering \includegraphics[width=0.45\textwidth]{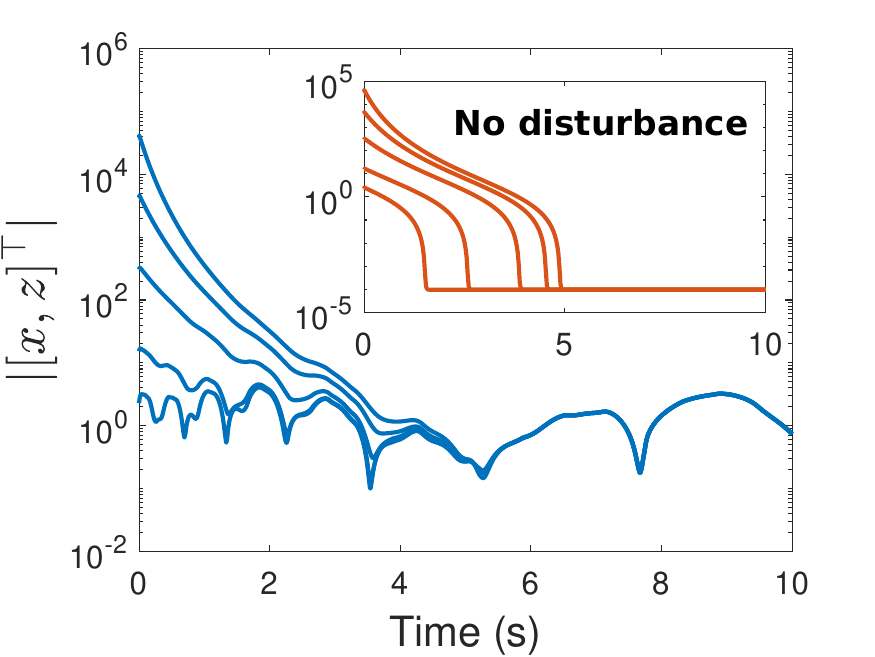}
    \caption{Trajectories of system \eqref{ex} {with and without the disturbance $u(t)$, where we use $\varepsilon=0.01$. The theoretically computed settling time bound of $18.15$ obtained using \eqref{settime} is conservative compared to the observed trajectories.}} \label{fxt1plot}
    \vspace{-0.4cm}
\end{figure}
\section{Fixed-Time Feedback Optimization with Time-varying Costs}\label{sec_ex}
In this section, we leverage the results of Theorems \ref{issthm}-\ref{zyeq} to study a practical problem of interest in the context of singular perturbations: feedback optimization under slowly-varying cost functions \cite{colombino2019online}. In contrast to the existing asymptotic results \cite{hauswirth2020timescale,bianchin2022online,9540998, 10189107}, we introduce an optimization-based controller able to achieve \tcb{closed-loop} FxT stability via non-Lipschitz feedback. In particular, we consider plants of the form
\begin{equation}\label{explant}
    \dot{z}=g(\hat{x},z),
\end{equation}
where $z\in\mathbb{R}^m$ is the state, $\hat{x}\in\mathcal{L}_\infty^n$ is a measurable and bounded control input, and $g:\re^n\times\re^m\to\mathbb{R}^m$ is a continuous function satisfying the following condition:
\begin{assumption}\label{assump_g}
    There exists a continuously differentiable, globally Lipschitz mapping $h:\mathbb{R}^n\to\mathbb{R}^m$ such that $g(\hat{x}, h(\hat{x}))=0$ for all $\hat{x}\in\mathbb{R}^n$. Moreover, there exists a $\mathcal{C}^1$ function $W:\mathbb{R}^m\to\mathbb{R}_{\ge 0}$ and $c_1, c_2>0$ such that
    \begin{subequations}\label{sandwichw}
    \begin{align}
        c_1|y|^2&\le W(y)\le c_2 |y|^2,\\
        \nabla W(y)^\top g(\hat{x}, y+h(\hat{x}))&\le -\kappa_1 W^{b_1}{(y)}-\kappa_2 W^{b_2}{(y)},
    \end{align}
    \end{subequations}
    for all $y\in\mathbb{R}^m$, where $\kappa_1, \kappa_2>0$, $b_1\in (0,1)$ and $b_2>1$. There also exists $\eta>0$ such that the function $W$ satisfies
        \begin{align}\label{assump_dw}
            \left\lvert \nabla W(y)\right\rvert&\le \eta |y|,
        \end{align}
    for all $y\in\mathbb{R}^{m}$.\QEDB
\end{assumption}

\vspace{0.1cm}
{Assumption \ref{assump_g} is the ``fixed-time'' version of standard open-loop stability assumptions considered in the setting of feedback optimization \cite{khalil, hauswirth2020timescale}. In particular, by taking $y=z-h(\hat{x})$ to quantify the deviation of $z$ from its steady-state approximation, the conditions of Assumption \ref{assump_g} simply ask that such deviation converges to zero by a fixed-time, for each fixed $\hat{x}\in\mathbb{R}^n$. For a general class of linear and nonlinear plants, this property can be achieved using different types of non-smooth controllers that combine super-linear and sub-linear feedback \cite{andrieu2008homogeneous}, homogeneity tools \cite{10704051}, or the implicit Lyapunov technique \cite{POLYAKOV2015332}, to name just a few examples.
}

\vspace{0.15cm}
Our primary goal is to design a control law on the input $\hat{x}$ that stabilizes \eqref{explant}\tcb{, in a fixed time, at} the solution of the following time-varying optimization problem
\begin{subequations}
\begin{align}\label{timeopt}
    &\min_{\hat{x},z}\ \phi_\theta(\hat{x},z)\\
    &\text{subject~ to:}~~z=h(\hat{x}).\label{optcon}
\end{align}
\end{subequations}

\begin{figure}[t!]
  \centering \includegraphics[width=0.35\textwidth]{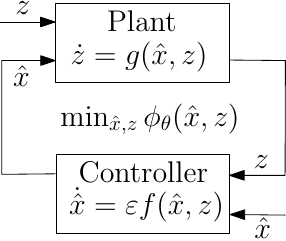}
    \caption{A block diagram illustrating the interconnection \eqref{exint}.} \label{fxtblock}
    \vspace{-0.4cm}
\end{figure}
{where the time variation on the cost functions $\phi_\theta:\mathbb{R}^n\times\mathbb{R}^m\to\mathbb{R}$ is induced by a dynamic parameter $\theta\in\mathbb{R}^q$ that evolves according to the dynamics
\begin{equation}   \dot{\theta}=\varepsilon\varepsilon_0\Pi(\theta),\quad\theta\in\Theta,\label{dtheta}
\end{equation}
where $\varepsilon_0\geq0$, and $\varepsilon>0$ is a small parameter that captures the rate of change of $\theta$, and $\Pi:\mathbb{R}^q\to\mathbb{R}^q$ and $\Theta\subset\mathbb{R}^q$ satisfy the following mild conditions, which can be used to cover a broad class of time-varying signals $t\mapsto\theta(t):$
\begin{assumption}\label{assumpt}
    The function $\Pi(\cdot)$ is Lipschitz continuous, and the set $\Theta$ is compact and forward invariant under the dynamics \eqref{dtheta}.\QEDB
\end{assumption}}
\vspace{0.1cm}

We can observe that, by substituting \eqref{optcon} into \eqref{timeopt}, we arrive at the unconstrained parameterized optimization problem:
\begin{equation}\label{opt}
    \min_{\hat{x}}\Phi_\theta(\hat{x}),
\end{equation}
where $\Phi_\theta(\hat{x}):=\phi_\theta(\hat{x}, h(\hat{x}))$. To guarantee that \eqref{opt} is a well-defined optimization problem with a unique solution for each $\theta\in\mathbb{R}^q$, we consider cost functions that satisfy the following assumptions,  {which are fairly standard in the time-varying feedback optimization literature \cite{9540998, 10189107}}:
\begin{assumption}\label{assump_phi}
    {The function $\hat{x}\mapsto\Phi_\theta(\hat{x})$ is $L$-smooth and $\kappa$-strongly convex, uniformly in $\theta$. Moreover, there exists a $\mathcal{C}^1$ function $\varphi:\mathbb{R}^q\to\mathbb{R}^n$ such that $\varphi(\theta)=\argmin_{\hx} \Phi_\theta(\hx)$.}\QEDB
\end{assumption}
% \vspace{0.1cm}
% Under this assumption, we obtain the following result:
% \begin{lemma}\label{lemp}
%     If the functions $\Phi_\theta(\cdot)$ satisfy Assumption \ref{assump_phi}, the following hold
%     \begin{subequations}\label{lemeq}
%     \begin{align}
%         \hx^\top \nabla\Phi_\theta(\hx+\varphi(\theta))&\ge \frac{1}{L}|\nabla\Phi_\theta(\hx+\varphi(\theta))|^2\\
%         |\nabla\Phi_\theta(\hx+\varphi(\theta))|&\ge \kappa|\hx|,
%     \end{align}
% \end{subequations}
% for all $\hx\in\mathbb{R}^n$ and $\theta\in\mathbb{R}^q$.\QEDB
% \end{lemma}
% \vspace{0.1cm}

\vspace{0.1cm}

Note that when $\varepsilon_0 = 0$ in \eqref{assumpt}, the parameter $\theta$ remains constant for all time, yielding a constant solution $\varphi(\theta)$ to \eqref{opt}. In contrast, when $\varepsilon_0 \gg 0$, the function $t \mapsto \varphi(\theta(t))$ may exhibit fast time variations that are difficult to track without additional information about the functions $\Pi$, $h$, and $\phi_{\theta}$. Therefore, to address the optimization problem \eqref{timeopt} using real-time gradient feedback, we can regard $\varepsilon_0 \Pi(\theta)$ as the ``input'' to the system and study FxT ISS with respect to the \emph{tracking error} of $\varphi(\theta(\cdot))$.

% Ideally, we would like to solve \eqref{opt} while stabilizing \eqref{explant} in fixed-time. However, since \eqref{opt} is a time-varying problem, the desired alternative would be to achieve \emph{fixed-time tracking} of the time-dependent solution of \eqref{opt}
\subsection{Fixed-Time Gradient-Based Feedback}
Given $\xi_1\in (0,1)$ and $\xi_2<0$, we define the following function $\mathcal{F}_{\xi_1,\xi_2}:\mathbb{R}^p\to\mathbb{R}^p$:
\begin{equation}
    \mathcal{F}_{\xi_1, \xi_2}(x)=\frac{x}{|x|^{\xi_1}}+\frac{x}{|x|^{\xi_2}},
\end{equation}
{which is continuous at $x=0$, see \cite{fixed_time}. To solve \eqref{opt} in fixed-time, we draw inspiration from the asymptotic counterpart \cite{hauswirth2020timescale} and propose a \emph{fixed-time} gradient flow on $\Phi_\theta(\hat{x})$ with time scale separation:
\begin{equation}\label{rgflow}
    \dot{\hat{x}}=-\varepsilon\fx(\hat{P}_\theta(\hx, h(\hx))),
\end{equation}
where
\begin{equation}\label{phat}
    \hat{P}_\theta(\hx, z):=H(\hx)^\top\nabla\phi_\theta(\hx,z),\quad H(x)^\top=[\mathbb{I}_n\quad \bj_{h}(x)^\top].
\end{equation}
It can be verified, using the chain rule, that 
\begin{equation*}
    \hat{P}_\theta(\hat{x}, h(\hat{x})) = \nabla \Phi_\theta(\hat{x}),
\end{equation*}
and hence, for each $\theta$, when the plant dynamics \eqref{explant} are negligible, the dynamics \eqref{rgflow} converge to the solution of \eqref{opt} in fixed time~\cite{fixed_time,nonsmoothesc}. Since in our setting the dynamics \eqref{explant} cannot be neglected, we can obtain a real-time feedback controller by replacing $h(\hx)$ in \eqref{rgflow} with the measured value of $z$ to obtain the following closed-loop system:
\begin{subequations}\label{exint}
    \begin{align}
        \dot{z}&=g(\hx,z)\\
        \dot{\hat{x}}&=-\varepsilon\fx(\hat{P}_\theta(\hx, z)).\label{exintb}
    \end{align}
\end{subequations}}
\begin{figure*}[t!]
  \centering
    \includegraphics[width=0.35\textwidth]{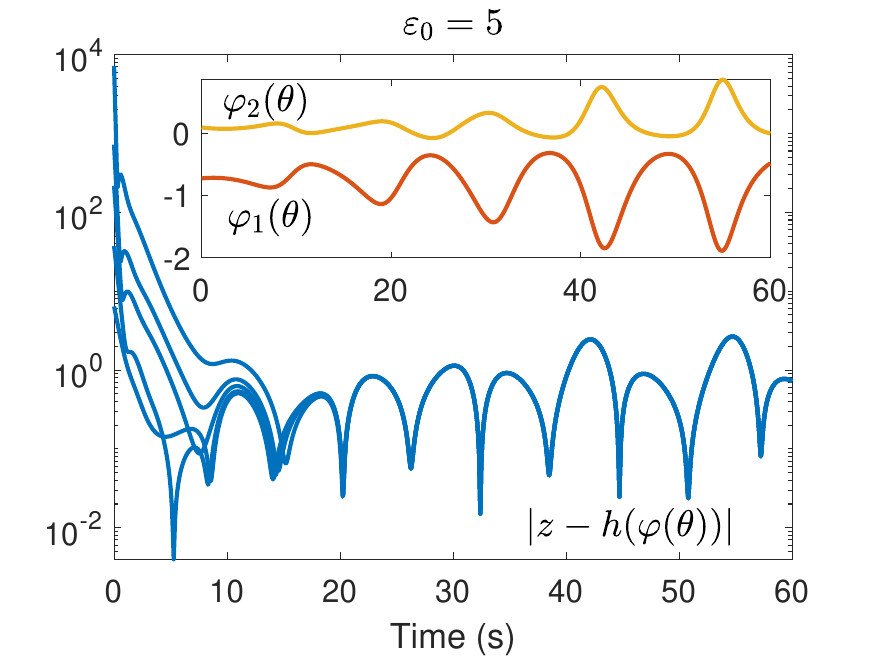}\hspace{-0.5cm}\includegraphics[width=0.35\textwidth]{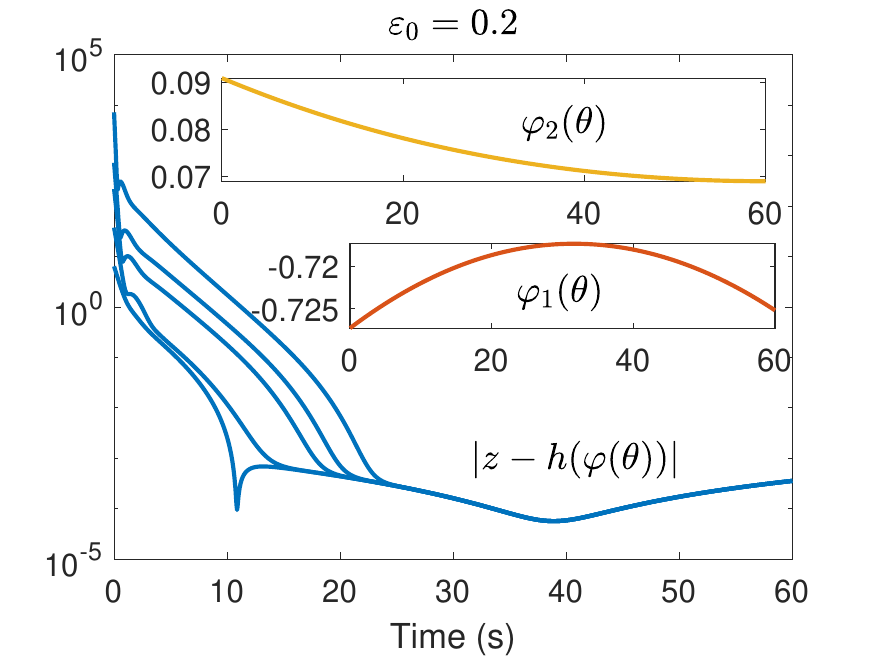}\hspace{-0.5cm}\includegraphics[width=0.35\textwidth]{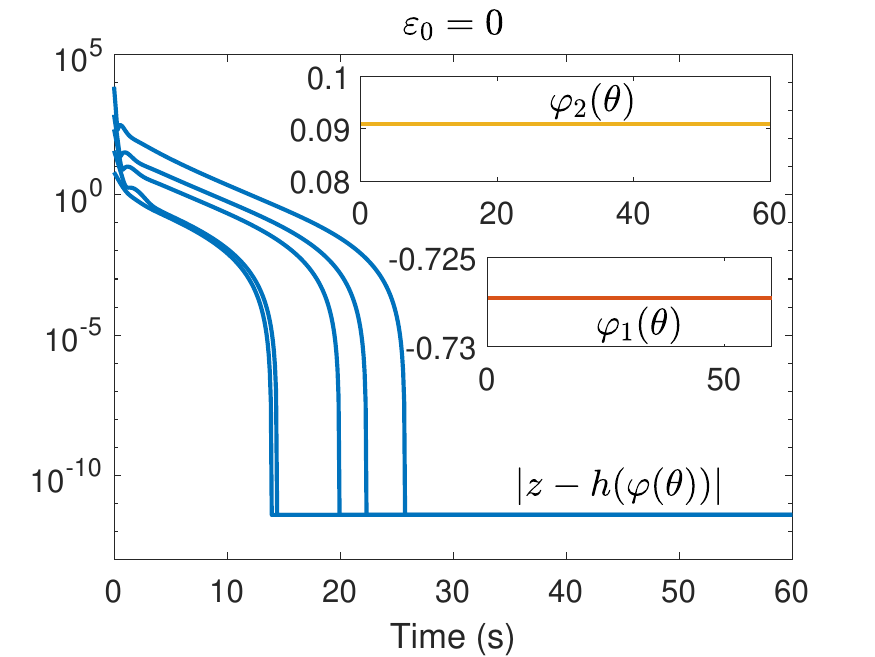}
    \caption{Trajectories of the FxT time-varying feedback optimization example, with $\varepsilon_0=5,0.2, 0$ and varying initial conditions.}\label{ex2plot}  
    \vspace{-0.2cm}
\end{figure*}
We study system \eqref{exint} under the following mild Lipschitz assumption, which is standard in the literature \cite{hauswirth2020timescale}:
\begin{assumption}\label{asp}
    For the function $\hat{P}_\theta(x,z)$ defined in \eqref{phat}, there exists $\ell>0$ such that 
    \begin{equation}\label{lipp}
        |\hat{P}_\theta(\hx,z')-\hat{P}_\theta(\hx,z)|\le \ell |z'-z|,
    \end{equation}
    for all $\hx\in\mathbb{R}^n$, $z',z\in\mathbb{R}^m$ and $\theta\in\mathbb{R}^q$.\QEDB
\end{assumption}

\vspace{0.1cm}
To put system \eqref{exint} into the form \eqref{sys}, let $x=\hx-\varphi(\theta)$ and $\tau=\varepsilon t$, which leads to the following dynamics in the $\tau$-time scale:
\begin{subequations}\label{exinth}
    \begin{align}
        \varepsilon\frac{dz}{d\tau}&=g(x+\vt,z)\label{zdynamics}\\
        \frac{dx}{d\tau}&=-\fx(\hat{P}_\theta(x+\vt, z))-\bj_\varphi(\theta)u(\tau)\label{xdynamics}\\
        \frac{d\theta}{d\tau}&=\varepsilon_0\Pi(\theta),\quad\theta\in\Theta\label{thetadynamics},
    \end{align}
\end{subequations}
where $u(\tau)=\varepsilon_0\Pi(\theta(\tau))$ can be thought of as the ``input" in \eqref{xdynamics}. Note that, since $\Theta$ is compact and forward invariant, and the trajectories of \eqref{thetadynamics} are restricted to evolve in $\Theta$, we only need to consider the stability properties of system \eqref{zdynamics}-\eqref{xdynamics}. The following result establishes FxT ISS for this system: 
%
%We will show that this system is FxT ISS with $\dot{\theta}$ being treated as the ``disturbance", so slower variations of $\theta$ will result in a more accurate $\mathcal{GKL}$ bound on the flows of \eqref{exint}. Since we assume $\Theta$ is compact, meaning $\Pi(\theta)$ is bounded on $\Theta$, we can also view $\varepsilon_0$ as the ``input". 
%
%{Since we view $\theta$ as being one of the system states, we aim to establish FxT ISS of the set $\{[h(\varphi(\theta)),0,\theta]^\top : \ \theta\in\Theta \}$ for the $[z,x,\theta]^\top$ system. This is a compact set, and its stability properties can be addressed by generalizing the notion of FxT ISS to include sets (i.e, replacing $|x|$ in Definitions \ref{fxtissdef} and \ref{fxtissv} with $|x|_{\mathcal{A}}$, where $|x|_{\mathcal{A}}=\inf_{y\in\mathcal{A}}|x-y|$).} 
%
%For this interconnection we can state the following result:

\vspace{0.1cm}
 \begin{thm}\label{exproof}
  Suppose that Assumptions \ref{assump_g}-\ref{assump_phi} hold, and consider the singularly perturbed system \eqref{exinth}. Then, for all $\xi_1\in\left(0,\min(2-2b_1,1)\right)$ and $\xi_2\in \left(2-2b_2,0\right)$, there exists $\beta\in\mathcal{GKL}$, $\varrho\in\mathcal{K}$ and $\varepsilon^*>0$ such that
     \begin{align*}
         &\left\lvert\begin{bmatrix}
             x(\tau)\\ z(\tau)-h(x(\tau)+\varphi(\theta(\tau)))
         \end{bmatrix}\right\rvert\\&~~~~~~~~~\le \beta\left(\left\lvert\begin{bmatrix}
             x_0\\ z_0-h(x_0+\varphi(\theta_0))
    \end{bmatrix}\right\rvert,\tau\right)+\varrho(|u(\tau)|_\infty),
     \end{align*}
     for all $\varepsilon\in(0, \varepsilon^*)$, $\tau\ge 0$ and $\varepsilon_0\ge 0$. \QEDB 
 \end{thm}
 \begin{proof} Consider the $\tau$-time scale system \eqref{zdynamics}. This system has a quasi-steady state $h(x+\vt)$, which results in the following reduced system:
 \begin{align}\label{rsys_ex}
    \frac{dx}{d\tau}&=-\fx(\nabla\Phi_\theta(x+\varphi(\theta)))-\varepsilon_0\bj_\varphi(\theta)\Pi(\theta).
\end{align}
Let $M=\sup_{\theta\in\Theta}|\bj_\varphi(\theta)|$, which is bounded since $\Theta$ is compact. Moreover, from Assumption \ref{assump_phi}, we have $\hx^\top \nabla\Phi_\theta(\hx+\varphi(\theta))\ge \frac{1}{L}|\nabla\Phi_\theta(\hx+\varphi(\theta))|^2$ and $|\nabla\Phi_\theta(\hx+\varphi(\theta))|\ge \kappa|\hx|$. Then, with the Lyapunov function $V(x)=|x|^2$, we can use Lemma \ref{amgm} to obtain
\begin{align*}
    &\frac{dV}{d\tau}\le -\frac{2}{L}|\nabla\Phi_\theta(x+\varphi(\theta))|^{2-\xi_1}-\frac{2}{L}|\nabla\Phi_\theta(x+\varphi(\theta))|^{2-\xi_2}\\&~~~+\tilde{c}|x|^2+\frac{M^2}{\tilde{c}}|\varepsilon_0\Pi(\theta)|^2\\
    &\le -\frac{\kappa^{2-\xi_1}}{L}V^{1-\frac12\xi_1}{(x)}-\frac{\kappa^{2-\xi_2}}{L}V^{1-\frac12\xi_2}{(x)}+\frac{M^2}{\tilde{c}}|\varepsilon_0\Pi(\theta)|^2,
\end{align*}
where $0<\tilde{c}<\frac{1}{L}\min\left\{{\kappa^{2-\xi_1}}, {\kappa^{2-\xi_2}}\right\}$, which shows that Assumption \ref{assump_rsys} is satisfied. Next, let $y=z-h(x+\vt)$, which leads to the following dynamics:
\begin{align}\label{ysys_ex}
    \frac{dy}{d\tau}&=\frac{1}{\varepsilon}g(x+\vt, y+h(x+\vt))\\&~~~-\bj_{h}(x+\vt)\left(\frac{dx}{d\tau}+\varepsilon_0\bj_\varphi(\theta)\Pi(\theta)\right).\notag
\end{align}
This yields the following boundary layer system:
\begin{equation}\label{blsys_ex}
    \dot{y}=g(x+\vt, y+h(x+\vt)).
\end{equation}
It is easy to verify that Assumption \ref{assump_blsys} holds with Lyapunov function $W(y)$ obtained from Assumption \ref{assump_g}, so it remains to check the interconnection conditions \eqref{intu1}-\eqref{intu2}. Indeed, if we denote $P_\theta(x,y):=\hat{P}_\theta(x+\vt, y+h(x+\vt))$, we have:
\begin{align*}
    |I_1|&\le 2|x|\bigg(\left\lvert \frac{P_\theta(x,y)}{|P_\theta(x,y)|^{\xi_1}}-\frac{P_\theta(x,0)}{|P_\theta(x,0)|^{\xi_1}}\right\rvert\\&~~~+\left\lvert \frac{P_\theta(x,y)}{|P_\theta(x,y)|^{\xi_2}}-\frac{P_\theta(x,0)}{|P_\theta(x,0)|^{\xi_2}} \right\rvert\bigg).
\end{align*}
We denote $\Delta_\theta(x,y)=P_\theta(x,y)-P_\theta(x,0)$. Recall that from Assumption \ref{asp}, we have $|\Delta_\theta(x,y)|\le \ell |y|$. Hence, we can use Lemma \ref{bounds} to obtain
\begin{align}
    \left\lvert \frac{P_\theta(x,y)}{|P_\theta(x,y)|^{\xi_1}}-\frac{P_\theta(x,0)}{|P_\theta(x,0)|^{\xi_1}}\right\rvert&\le 2^{\xi_1}|\Delta_\theta(x,y)|^{1-\xi_1}\notag\\
    &\le 2^{\xi_1}\ell^{1-\xi_1}|y|^{1-\xi_1}\label{ex2i1bd1}.
\end{align}
We can again leverage Lemma \ref{bounds} to obtain
\begin{align}
    &\left\lvert \frac{P_\theta(x,y)}{|P_\theta(x,y)|^{\xi_2}}-\frac{P_\theta(x,0)}{|P_\theta(x,0)|^{\xi_2}} \right\rvert\notag \\&\le K |\Delta_\theta(x,y)|\left(|P_\theta(x,0)|^{-\xi_2}+|\Delta_\theta(x,y)|^{-\xi_2}\right)\notag\\
    &\le K\ell |y|\left(L^{-\xi_2}|x|^{-\xi_2}+\ell^{-\xi_2}|y|^{-\xi_2}\right)\label{ex2i1bd2},
\end{align}
where the bottom inequality uses the fact that $|P_\theta(x,0)|=|\nabla \Phi(x+\varphi_\theta(\theta))|\le L|x|$. We can combine \eqref{ex2i1bd1} and \eqref{ex2i1bd2} to obtain
\begin{align}
    |I_1|
    &\le 2^{\xi_1+1}\ell^{1-\xi_1}|x||y|^{1-\xi_1}\notag\\&~~~+\tcb{2K\ell}|x||y|\left(\ell^{-\xi_2}|y|^{-\xi_2}+L^{-\xi_2}|x|^{-\xi_2}\right)\label{ex1i1bd}
\end{align}

By Lemma \ref{amgm}, we have
\begin{subequations}\label{ex1amgm}
\begin{align}
    \delta_1|x||y|^{1-\xi_1}&\le {c}|x|^{2-\xi_1}+\delta_1^{1+\sigma_1}{c}^{-\sigma_1}|y|^{2-\xi_1}\\
    \delta_2|x||y|^{1-\xi_2}&\le {c}|x|^{2-\xi_2}+\delta_2^{1+\sigma_2}{c}^{-\sigma_2}|y|^{2-\xi_2}\\
    \delta_3|x|^{1-\xi_2}|y|&\le {c}|x|^{2-\xi_2}+\delta_3^{1+\frac{1}{\sigma_2}}{c}^{-\frac{1}{\sigma_2}}|y|^{2-\xi_2}
\end{align}
\end{subequations}
for all ${c}>0$, where $\sigma_i=\frac{1}{1-\xi_i}$, $\delta_1=2^{\xi_1+1}\ell^{1-\xi_1}, \delta_2=2K\ell^{1-\xi_2}$, and $\delta_3=2K\ell L^{-\xi_2}$. Let $\tilde{V}(x)=|x|^{1-\frac12 \xi_1}+|x|^{1-\frac12\xi_2}$ and $\tilde{W}(y)=W^{\frac12 b_1}(y)+W^{\frac12 b_2}(y)$. Since
\begin{equation}
    2b_1\le 2-\xi_1\le2-\xi_i\le 2-\xi_2\le 2b_2
\end{equation}
holds for $i=1,2$, we also have that
\begin{subequations}\label{ex1lfbd}
    \begin{align}
        |x|^{2-\xi_i}&\le |x|^{2-\xi_1}+|x|^{2-\xi_2}\le \tilde{V}^2(x)\\
        |y|^{2-\xi_i}&\le c_1^{\frac12\xi_i-1}W^{1-\frac12\xi_1}(y)\le c_1^{\frac12\xi_i-1}\tilde{W}^2(y) \label{ex1ylf}
    \end{align}
\end{subequations}
for $i=1,2$. 

We can then combine \eqref{ex1i1bd}, \eqref{ex1amgm}, \eqref{ex1lfbd} to obtain
\begin{align*}
|I_1|&\le c(|x|^{2-\xi_1}+2|x|^{2-\xi_2})+\delta_1^{1+\sigma_1}{c}^{-\sigma_1}|y|^{2-\xi_1}\\&~~~+\delta_2^{1+\sigma_2}{c}^{-\sigma_2}|y|^{2-\xi_2}+\delta_3^{1+\frac{1}{\sigma_2}}{c}^{-\frac{1}{\sigma_2}}|y|^{2-\xi_2}
\end{align*}
Next, using 
\begin{align*}
m&:=\max\{\delta_1^{1+\sigma_1}, \delta_2^{1+\sigma_2}, \delta_3^{1+\frac{1}{\sigma_2}}\}\\
&=\max\{(2^{\xi_1+1}\ell^{1-\xi_1})^{1+\sigma_1}, \tcb{(2K\ell^{1-\xi_2})^{1+\sigma_2}},\tcb{(2K\ell L^{-\xi_2})^{1+\frac{1}{\sigma_2}}}\}
\end{align*}
we obtain:
\begin{align*}
|I_1|&\le c(|x|^{2-\xi_1}+2|x|^{2-\xi_2})\\
&~~~~+m\Big({c}^{-\sigma_1}|y|^{2-\xi_1}+{c}^{-\sigma_2}|y|^{2-\xi_2}+{c}^{-\frac{1}{\sigma_2}}|y|^{2-\xi_2}\Big).
\end{align*}
Using \tcb{${\mu}_1(c)=\left(m\cdot\max\{c^{-\sigma_1},c^{-\sigma_2},c^{-\frac{1}{\sigma_2}}\}\right)^{-1}$} we obtain:
\begin{align*}
|I_1|\le c(|x|^{2-\xi_1}+\tcb{2}|x|^{2-\xi_2})+\frac{1}{{\mu}_1(c)}\left(|y|^{2-\xi_1}+\tcb{2}|y|^{2-\xi_2}\right).
\end{align*}
Therefore, using the definition of $\tilde{V}$, and by \eqref{ex1ylf}, we have
\begin{align*}
|I_1|&\le \tcb{2}c\tilde{V}(x)+\frac{1}{{\mu}_1(c)}\left(c_1^{\frac12\xi_1-1}\tilde{W}^2(y)+2c_1^{\frac12\xi_2-1}\tilde{W}^2(y)\right)\\
%&\leq 2c\tilde{V}^2(x)+\frac{1}{\mu_1(c)}c_1^{\frac12\xi_1-1}\tilde{W}^2(y)+\frac{2}{\mu_1(c)}c_1^{\frac12\xi_2-1}\tilde{W}^2(y)\\
&= \tcb{2}c\tilde{V}^2(x)+\frac{(c_1^{\frac12\xi_1-1}+\tcb{2}c_1^{\frac12\xi_2-1})}{\mu_1(c)}\tilde{W}(y)^2.
\end{align*}
%
% Next, we define the function $\mu_1:(0,\infty)\to(0, \infty)$ given by
% %
% \begin{equation}
%     \mu_1(s)=m(s^{-\sigma_1}+ s^{-\sigma_2}+s^{-\frac{1}{\sigma_2}}).
% \end{equation} 
% %
% where the constant $m>0$ is given by
% %
% It follows that:
% %
% \begin{align*}
%     |I_1|&\le c(|x|^{2-\xi_1}+2|x|^{2-\xi_2})+\delta_1^{1+\sigma_1}{c}^{-\sigma_1}|y|^{2-\xi_1}\\&~~~+\delta_2^{1+\sigma_2}{c}^{-\sigma_2}|y|^{2-\xi_2}+\delta_3^{1+\frac{1}{\sigma_2}}{c}^{-\frac{1}{\sigma_2}}|y|^{2-\xi_2}
%     \\&\le 2c\tilde{V}^2(x)+\mu_1({c})\tilde{W}^2(y)
% \end{align*}
%
for all $c>0$. Picking ${c}$ such that $0<c<\frac{1}{2L}\min\left\{{\kappa^{2-\xi_1}}, {\kappa^{2-\xi_2}}\right\}$ establishes \eqref{intu1} and \eqref{vcond}. To verify \eqref{intu2}, we have
\begin{align}\label{i2ex}
    I_2&=\nabla W(y)^\top\bj_{h}(x+\vt)\fx(P_\theta(x,y)).
\end{align}
Let $h^*:=\sup_{x}|\bj_{h}(x)|$, which exists and is finite since $h$ is $\mathcal{C}^1$ and globally Lipschitz. 
Moreover, we make note of the following fact:
\begin{small}
\begin{align*}
    |P_\theta(x,y)|\le |P_\theta(x,y)-P_\theta(x,0)|+|P_\theta(x,0)|\le \ell |y|+L|x|.
\end{align*}
\end{small}
With this in mind, 
We can proceed from \eqref{i2ex} as follows:
\begin{align*}
    |I_2|&\le \eta h^* |y|\left((\ell |y|+L|x|)^{1-\xi_1}+(\ell |y|+L|x|)^{1-\xi_2}\right)\\
    &\le \eta h^*|y|\big(\ell^{1-\xi_1}|y|^{1-\xi_1}+L^{1-\xi_1}|x|^{1-\xi_1}\\&~~~+2^{-\xi_2}(\ell^{1-\xi_2}|y|^{1-\xi_2}+L^{1-\xi_2}|x|^{1-\xi_2})\big)\\
    &\le \tilde{L}\left(|x|^{2-\xi_1}+|x|^{2-\xi_2}+|y|^{2-\xi_1}+|y|^{2-\xi_2}\right)\\
    &\le \tilde{L}\left(\tilde{V}^2(x)+2\tilde{W}^2(y)\right),
\end{align*}
where \tcb{$\tilde{L}:=\eta h^*\max\{\ell^{1-\xi_1}+ L^{1-\xi_1}, 2^{-\xi_2}\left(\ell^{1-\xi_2}+ L^{1-\xi_2}\right)\}$}. By applying Corollary \ref{cor_unif} we obtain the result.
\end{proof}
\vspace{0.1cm}

To illustrate Theorem \ref{exproof} via a numerical example, we simulate system \eqref{exint} {with the plant dynamics given by
$\dot{z}=-\mathcal{F}_{\frac25, -\frac27}(z-2\hx)$, which has the quasi-steady state $h(\hx)=2\hx$}. The cost function $\phi_\theta$ takes the quadratic form $\phi_\theta(\hx,z)=\frac12 \hx^\top Q_\theta z+b_\theta^\top z$, where $Q_\theta$ and $b_\theta$ are given by
\begin{equation*}
    Q_\theta:=\begin{bmatrix}
        3+d_1{(t)} & 2\\ 2& 5+d_2{(t)}
    \end{bmatrix},\quad b_\theta:=\begin{bmatrix}
        2+d_3{(t)}\\ 1
    \end{bmatrix}.
\end{equation*}
The parameters $d_i$ are given by $d_1(t)=0.8\sin(2.2\varepsilon\varepsilon_0 t)$, $d_2(t)=1.8\sin(1.7\varepsilon\varepsilon_0 t)$, $d_3(t)=0.66\sin(1.9\varepsilon\varepsilon_0 t)$. These signals can be generated by a system of the form \eqref{dtheta} by setting $\theta(t)\in\mathbb{R}^6$, $d_i(t)=\theta_{2i}(t)$, and $\Pi(\theta)=\mathcal{R}\theta$, where $\mathcal{R}\in\mathbb{R}^{6\times 6}$ is a block diagonal matrix with rotation matrices on the diagonal. Moreover, in accordance with Proposition \ref{exproof} we set $\xi_1=\frac13, \xi_2=-\frac15$. It can be verified that $Q_\theta\succ 0$ for all $t\ge 0$, and the optimizer of $\Phi_\theta$ is given by $\vt=-Q_\theta^{-1}b_\theta$. {We interconnect \eqref{exintb} with the plant dynamics, where $\hat{P}_\theta(\hx, z)=\frac12 Q_\theta z+Q_\theta \hx+2b_\theta$.} The trajectories of the system are shown in Figure \ref{ex2plot}, with $\varepsilon=0.05$ and different values of $\varepsilon_0$. As observed in the plot, the state $z$ converges in fixed-time to a neighborhood of the time-varying optimizer, whose size shrinks as $\varepsilon_0\to0^+$.
\section{Conclusion}\label{sec_conc}
We establish sufficient Lyapunov conditions for the study of FxT ISS properties in singularly perturbed systems. The results were applied to two illustrative examples: a particular nonsmooth second-order interconnection of systems, and a general fixed-time feedback optimization problem with time-varying cost functions, which has not been addressed before using fixed-time stability tools. Our method of verifying the interconnection conditions establishes an efficient paradigm for applying our results to other classes of algorithms and feedback schemes that exhibit multiple time scales. Future research directions include applying our results to a broader range of systems, including systems with more than two time scales. {Moreover, it is also of interest to identify a more general class of systems and quasi-steady state mappings for which our interconnection conditions hold, including characterizations based on homogeneity. %We hypothesize that our interconnection conditions hold for a general class of systems satisfying certain homogeneity (in the bi-limit) properties, but this remains an interesting and challenging open question to be addressed in future work.
}
\bibliographystyle{IEEEtran}
\bibliography{autosam.bib,Biblio.bib, noncoop}

\section{Appendix}

We recall that Young's inequality states that $ab\leq \frac{a^p}{p}+\frac{b^q}{q}$ for all $a,b\geq0$, $p,q>1$ such that $\frac{1}{p}+\frac{1}{q}=1$. Also, we recall that, for a real-valued convex function $f$, Jensen's inequality states that:
\begin{equation}\label{jenseninequality}
f\left(\sum_{i=1}^nw_is_i\right)\leq \sum_{i=1}^nw_i f(s_i),~~\forall~s_i\in\mathbb{R},~
\end{equation}
where $\sum_{i=1}^n w_i=1$ and $w_i>0$ for all $i\in\{1,2,\ldots, n\}$. If $f$ is concave, then \eqref{jenseninequality} holds with the inequality reversed. As a consequence of \eqref{jenseninequality}, we obtain the following result
\begin{lemma}\label{jensenlemma}
    Let $s_i\ge 0$ for each $i\in \{1,2,...,n\}$. If $p\in (0,1]$, then $(\sum_{i=1}^n s_i)^p\le \sum_{i=1}^n s_i^p$. If $p>1$, then $(\sum_{i=1}^n s_i)^p\le n^{p-1}\sum_{i=1}^n s_i^p$.\QEDB
\end{lemma}

\vspace{0.1cm}
We present some additionally auxiliary lemmas that are utilized throughout our proofs:
\begin{lemma}\label{lem_sandw}
    Given $\underline{p}\le p\le \overline{p}$, the following holds
    \begin{equation*}
        x^p\le x^{\underline{p}}+x^{\overline{p}},
    \end{equation*}
    for all $x\ge 0$.\QEDB
\end{lemma}
\begin{proof}
    For $x\in [0,1]$ we have $x^p\le x^{\underline{p}}$, and for $x\ge 1$ we have $x^p\le x^{\overline{p}}$. We combine both cases to establish the result.
\end{proof}
\vspace{0.1cm}
\begin{lemma}\label{lemma_ex0}
    Given $\alpha>0$, the following holds:
    \begin{equation*}
        x\sg{x+u}^\alpha\ge 2^{-\alpha}|x|^{\alpha+1}
    \end{equation*}
    for all $x,u\in\re$ that satisfy $|x|>2|u|$.
\end{lemma}
\begin{proof}
    First assume $x<0$, which implies $|u|<-\frac12 x $. Since $\sg{x}^\alpha$ is strictly increasing in $x$, we have $\sg{x+u}^\alpha<\sg{x-\frac12 x}^\alpha=2^{-\alpha}\sg{x}.$
    Multiplying both sides by $x$ yields the result. Now suppose $x>0$, which implies $-|u|>-\frac12 x$. Then we have $\sg{x+u}^\alpha\ge \sg{x-|u|}^\alpha>\sg{x-\frac12 x}^\alpha=2^{-\alpha}\sg{x}^\alpha.$ We again multiply both sides by $x$ to obtain the result.
\end{proof}

As a direct consequence of Lemma \ref{amgm}, we have the following additional result
\begin{lemma}\label{amgm_abs}
    Given $x, y\ge 0$ and $p_1, p_2>0$, the following inequality holds:
    \begin{equation*}
        \delta|x|^{p_1}|y|^{p_2}\le {c}|x|^{p_1+p_2}+\delta^{1+\frac{p_1}{p_2}}{c^{-\frac{p_1}{p_2}}}|y|^{p_1+p_2}
    \end{equation*}
    for all $\delta>0$ and {$c>0$}.\QEDB
\end{lemma}
\begin{proof}
Let $\tilde{c}=\frac{c}{\delta}>0$.
    By Lemma \ref{amgm}, we have
    \begin{align*}
        \delta|x|^{p_1}|y|^{p_2}&\le \delta\left(\tilde{c}|x|^{p_1+p_2}+\tilde{c}^{-\frac{p_1}{p_2}}|y|^{p_1+p_2}\right)\\
        &={c}|x|^{p_1+p_2}+\delta^{1+\frac{p_1}{p_2}}{c^{-\frac{p_1}{p_2}}}|y|^{p_1+p_2}
    \end{align*}
    which establishes the result.
\end{proof}
\subsection{Proof of Lemma \ref{amgm}}
\begin{proof} Let $p_1,p_2>0$ and $x,y>0$  be given (note that the case $x=0$ or $y=0$ follows trivially). For any $c>0$, define $\tilde{c}:=\left(\frac{(p_1+p_2)}{p_1}c\right)^{\frac{p_1}{p_1+p_2}}$. Using $a:=\tilde{c}|x|^{p_1}$, $b:=\frac{1}{\tilde{c}}|y|^{p_2}$ and $p:=\frac{p_1+p_2}{p_1}$ (which implies $q=\frac{p_1+p_2}{p_2}$) we directly obtain:
\begin{align*}
&|x|^{p_1}|y|^{p_2}\leq \frac{p_1}{p_1+p_2}\left(\tilde{c}|x|^{p_1}\right)^{\frac{p_1+p_2}{p_1}}+\frac{p_2}{p_1+p_2}\left(\frac{1}{\tilde{c}}|y|^{p_2}\right)^{\frac{p_1+p_2}{p_2}}\\
&~~~~=\frac{p_1}{p_1+p_2}\tilde{c}^{\frac{p_1+p_2}{p_1}}|x|^{p_1+p_2}+\frac{p_2}{p_1+p_2}\left(\frac{1}{\tilde{c}}\right)^{\frac{p_1+p_2}{p_2}}|y|^{p_1+p_2}.
\end{align*}
Since $c=\frac{p_1}{p_1+p_2}\tilde{c}^{\frac{p_1+p_2}{p_1}}$ and $\tilde{c}^{\frac{p_1+p_2}{p_2}}=\left(\frac{(p_1+p_2)}{p_1}c\right)^{\frac{p_1}{p_2}}$ we obtain
\begin{align*}
|x|^{p_1}|y|^{p_2}&\leq c|x|^{p_1+p_2}+\frac{p_2}{p_1+p_2}\left(\frac{1}{\left(\frac{(p_1+p_2)}{p_1}c\right)^{\frac{p_1}{p_2}}}\right)|y|^{p_1+p_2}\\
&=c|x|^{p_1+p_2}+\frac{p_2}{p_1+p_2}\left(\frac{p_1}{(p_1+p_2)c}\right)^{\frac{p_1}{p_2}}|y|^{p_1+p_2}
\end{align*}
Since $\frac{p_2}{p_1+p_2}<1$ and $(\frac{p_1}{p_1+p_2})^{\frac{p_1}{p_2}}<1$, we finally obtain
    \begin{align*}
        |x|^{p_1}|y|^{p_2}\le c |x|^{p_1+p_2}+c^{-\frac{p_1}{p_2}}|y|^{p_1+p_2},
    \end{align*}
    which establishes the result.
\end{proof}
\subsection{Proof of Lemma \ref{bounds}}
\begin{proof}
    First, we establish \eqref{bdxi1}. Since $|s|^2=s^\top s$ for any vector $s\in\mathbb{R}^n$,
    squaring both sides of \eqref{bdxi1} we see that it suffices to show:

\vspace{-0.2cm}
\begin{small}
\begin{align}\label{t1gen}
    |x|^{2-2{\xi_1}}+|y|^{2-2{\xi_1}}-\frac{2 x^\top y}{|x|^{{\xi_1}} |y|^{{\xi_1}}}&\le 4^{{\xi_1}} \Big(|x|^2+|y|^2-2x^\top y\Big)^{1-{\xi_1}}.
\end{align}
\end{small}

\vspace{-0.2cm}\noindent 
To show this inequality we consider three cases: $x^\top y> 0$, $x^\top y<0$, and $x^\top y=0$. First, suppose $x^\top y>0$. It suffices to show the stronger inequality
\begin{small}
\begin{align}\label{t1case1}
    |x|^{2-2{\xi_1}}+|y|^{2-2{\xi_1}}-\frac{2 x^\top y}{|x|^{{\xi_1}} |y|^{{\xi_1}}}&\le 2^{{\xi_1}} \Big(|x|^2+|y|^2-2x^\top y\Big)^{1-{\xi_1}}.
\end{align}
\end{small}

\vspace{-0.3cm}\noindent 
We assume $|x|^{2-2{\xi_1}}+|y|^{2-2{\xi_1}}\neq 0$, because otherwise we would have $x=y=0$, from which the result trivially follows. 

We note that inequality \eqref{t1case1} is homogeneous. Indeed, if we fix some $\delta>0$ and replace $x$ and $y$ in \eqref{t1case1} with $\delta x$ and $\delta y$ respectively, we obtain
\begin{align}
    |\delta x|^{2-2{\xi_1}}+|\delta y|^{2-2{\xi_1}}-\frac{2 (\delta x)^\top (\delta y)}{|\delta x|^{{\xi_1}} |\delta y|^{{\xi_1}}}&\le 2^{{\xi_1}} \Big(|\delta x|^2+|\delta y|^2\notag\\
    &~~-2(\delta x)^\top (\delta y)\Big)^{1-{\xi_1}}\label{rescaledinequality}.
\end{align}
Factorizing $\delta$ at the left and right-hand sides, we obtain 
\begin{align*}
    \delta^{2-2\xi_1}|x|^{2-2{\xi_1}}&+\delta^{2-2\xi_1}|y|^{2-2{\xi_1}}-\frac{\delta^{2-2\xi_1} 2x^\top y}{|x|^{{\xi_1}} | y|^{{\xi_1}}}\\
    &\le 2^{{\xi_1}}\delta^{2-2\xi_1}\Big(| x|^2+|y|^2-2x^\top y\Big)^{1-{\xi_1}},
\end{align*}
which is equivalent to \eqref{t1case1} after dividing both sides by $\delta^{2-2\xi_1}$.

For any $x,y\in\mathbb{R}^n\setminus\{0\}$ and $\xi_1\in(0,1)$, we have $|x|^{2-2{\xi_1}}+|y|^{2-2{\xi_1}}=p$ for some $p>0$.  Dividing both sides by $p$, we obtain
\begin{equation}\label{unitsum}
|\delta x|^{2-2{\xi_1}}+|\delta y|^{2-2{\xi_1}}=1.
\end{equation}
where $\delta:=\frac{1}{p^{\frac{1}{2-2\xi_1}}}$. Thus, it suffices to show \eqref{rescaledinequality} using \eqref{unitsum} with $\tilde{x}=\delta x$ and $\tilde{y}=\delta y$. Therefore, without loss of generality and to avoid introducing new notation, we can assume that $|x|^{2-2{\xi_1}}+|y|^{2-2{\xi_1}}=1$.  Under this condition, we can state the following:
 \begin{align}
     |x|^{{\xi_1}} |y|^{{\xi_1}}&\le \left(\frac{|x|^{2-2{\xi_1}}+|y|^{2-2{\xi_1}}}{2}\right)^\frac{{\xi_1}}{1-{\xi_1}}=2^{\frac{{\xi_1}}{{\xi_1}-1}},\label{keystep1lemma2}
 \end{align}
where the inequality follows directly by Young's inequality with $p=q=2$, $a=|x|^{1-\xi_1}$ and $b=|y|^{1-\xi_1}$.

Similarly, using \eqref{jenseninequality} with $w_i=\frac{1}{2}$, $s_1=|x|^{2-2\xi_1}$, $s_2=|y|^{2-2\xi_1}$, and $f(\cdot)=(\cdot)^\frac{1}{1-\xi_1}$ (which is convex since $\xi_1\in(0,1)$) we have the inequality
\begin{equation}\label{auxjense}
\frac{1}{2}|x|^2+\frac{1}{2}|y|^2\geq \left(\frac{1}{2}|x|^{2-2\xi_1}+\frac{1}{2}|y|^{2-2\xi}\right)^{\frac{1}{1-\xi_1}}.
\end{equation}
Using $|x|^{2-2{\xi_1}}+|y|^{2-2{\xi_1}}=1$ and multiplying both sides of \eqref{auxjense} by $2$, we obtain
\begin{equation}\label{keystepjense}
     |x|^2+|y|^2\ge 2^{\frac{{\xi_1}}{{\xi_1}-1}}.
\end{equation}
Using \eqref{keystep1lemma2} and \eqref{keystepjense}, it suffices to show
\begin{equation*}
         1-\frac{2x^\top y}{2^\frac{{\xi_1}}{{\xi_1}-1}}\le 2^{{\xi_1}} \left(2^{\frac{{\xi_1}}{{\xi_1}-1}}-2x^\top y\right)^{1-{\xi_1}},
\end{equation*}
which is equivalent to
\begin{equation}\label{t1gendone}
         \left(2^{\frac{{\xi_1}}{{\xi_1}-1}}-2x^\top y\right)^{{\xi_1}}\le 2^{\frac{{\xi_1}^2}{{\xi_1}-1}}.
\end{equation}
Notice that since $0<x^\top y\le |x||y|\le 2^\frac{1}{{\xi_1}-1}$ (which follows by Cauchy-Schwartz inequality and \eqref{keystep1lemma2}), we have $0\le 2^{\frac{{\xi_1}}{{\xi_1}-1}}-2x^\top y<2^{\frac{{\xi_1}}{{\xi_1}-1}}$, which implies \eqref{t1gendone}. 

Now, we consider the case where $x^\top y<0$. First, we can denote $|x|^{2-2{\xi_1}}+|y|^{2-2{\xi_1}}=p$, for some $p>0$. Let $\delta:=\frac{|x^\top y|}{|x||y|}$, which is well-defined because $x\neq0$ and $y\neq0$. By Cauchy-Schwartz we have that $\delta\in(0,1]$. It follows that \eqref{t1gen} becomes
 \begin{equation}\label{intermediateboundlemma2}
     p+2\delta|x|^{1-{\xi_1}}|y|^{1-{\xi_1}}\le 4^{{\xi_1}}\left(|x|^2+|y|^2+2\delta|x||y|\right)^{1-{\xi_1}}.
 \end{equation}
Now take $\tilde{w}_1, \tilde{w}_2>1$ satisfying $\frac{2}{\tilde{w}_1}+\frac{1}{\tilde{w}_2}=1$. Using the fact that $f(\cdot)=(\cdot)^{1-{\xi_1}}$ is concave on $\mathbb{R}_{\ge0}$ because $\xi_1\in(0,1)$, we obtain the following via the reverse Jensen's inequality:
\begin{align*}      
&4^{{\xi_1}}\left(\frac{1}{\tilde{w}_1}\tilde{w}_1|x|^2+\frac{1}{\tilde{w}_1}\tilde{w}_1|y|^2+\frac{1}{\tilde{w}_2}\tilde{w}_22\delta|x||y|\right)^{1-{\xi_1}}
\\
&\geq 4^{\xi_1}\Bigg(\frac{1}{\tilde{w}_1}\tilde{w}_1^{1-\xi_1}|x|^{2-2\xi_1} + \frac{1}{\tilde{w}_1}\tilde{w}_1^{1-\xi_1}|y|^{2-2\xi_1}\\
&~~~~+\frac{1}{\tilde{w}_2}(2\tilde{w}_2\delta)^{1-\xi_1}(|x||y|)^{1-\xi_1}\Bigg)\\
&=4^{{\xi_1}}\left(\tilde{w}_1^{-{\xi_1}}p+\tilde{w}_2^{-{\xi_1}}(2\delta)^{1-{\xi_1}}|x|^{1-{\xi_1}} |y|^{1-{\xi_1}}\right).
\end{align*}
Thus, it suffices to establish the following stronger bound
 \begin{align*}
     p+2\delta|x|^{1-{\xi_1}}|y|^{1-{\xi_1}}&\le 4^{{\xi_1}}\Big(\tilde{w}_1^{-{\xi_1}}p\\&~~~+\tilde{w}_2^{-{\xi_1}}(2\delta)^{1-{\xi_1}}|x|^{1-{\xi_1}} |y|^{1-{\xi_1}}\Big),
 \end{align*}
which holds if $4^{\xi_1}\tilde{w}_1^{-\xi_1}>1$ and $4^{\xi_1}\tilde{w}_2^{-\xi_1}(2\delta)^{1-\xi_1}>2\delta$. Thus, since 
\begin{equation}
\max\left(\tilde{w}_1^{{\xi_1}}, (2\tilde{w}_2)^{{\xi_1}}\right)\ge \max\left(\tilde{w}_1^{{\xi_1}}, \frac{\tilde{w}_2^{{\xi_1}}2\delta}{(2\delta)^{1-{\xi_1}}} \right),
\end{equation}
it suffices to have $4^{{\xi_1}}\ge \max\left(\tilde{w}_1^{{\xi_1}}, (2\tilde{w}_2)^{{\xi_1}}\right)$. Since $2\tilde{w}_2=\frac{2\tilde{w}_1}{\tilde{w}_1-2}$, the above condition holds with $\tilde{w}_1=4,~\tilde{w}_2=2$, which establishes the bound \eqref{intermediateboundlemma2}.

Finally, we consider the case $x^\top y=0$. Here, \eqref{t1gen} reduces to 
\begin{align}
    |x|^{2-2{\xi_1}}+|y|^{2-2{\xi_1}}&\le 4^{{\xi_1}} \Big(|x|^2+|y|^2\Big)^{1-{\xi_1}}.
\end{align}
which follows directly by 
\begin{align}
    |x|^{2-2{\xi_1}}+|y|^{2-2{\xi_1}}&\le 2^{{\xi_1}} \Big(|x|^2+|y|^2\Big)^{1-{\xi_1}},
\end{align}
which is a direct application of the reverse Jensen's inequality. Combining the above three cases establishes \eqref{bdxi1}. 
%We essentially want to pick $w_1, w_2$ such that $\max\left(w_1^{a_1}, (2w_2)^{a_1}\right)$ is minimiyed. 
%Since $2w_2=\frac{2w_1}{w_1-2}$ is strictly decreasing in $w_1$, we see that $\max\left(w_1^{{\xi_1}}, (2w_2)^{{\xi_1}}\right)$ is minimized at %$w_1$ satisfying $w_1>2$ and $w_1=2w_2$, which is at 
%$w_1=4$. 
%Therefore at $w_1=4, w_2=2$ we have $4^{{\xi_1}}=\max\left(w_1^{{\xi_1}}, (2w_2)^{{\xi_1}}\right)$ and we are done.

\vspace{0.2cm}
Next, we consider \eqref{bdxi2}. We can assume $x\neq 0$ and $y\neq 0$, since otherwise the statement holds for $K=1$. We first consider the case where $|x|\ge |y|>0$. We have
\begin{align*}
    \left\lvert \dfrac{x}{|x|^{{\xi_2}}}-\dfrac{y}{|y|^{{\xi_2}}} \right\rvert&\le 
|x-y||x|^{-{\xi_2}}+|y|\left\lvert|x|^{-{\xi_2}}-|y|^{-{\xi_2}}\right\rvert, 
\end{align*}
which is obtained by adding and subtracting $\frac{y}{|x|^{\xi_2}}$ to the terms in the left side of the inequality and using the triangle inequality. Since $||x|-|y||\le |x-y|$, it suffices to show
\begin{align}
    &|y|\left\lvert|x|^{-{\xi_2}}-|y|^{-{\xi_2}}\right\rvert\notag\\&\le K\left\lvert|x|-|y|\right\rvert^{-{\xi_2}+1}+(K-1)|x|^{-{\xi_2}}\left\lvert|x|-|y|\right\rvert.\label{step1}
\end{align}
Let $p:=\frac{|x|}{|y|}\ge 1$; Then, \eqref{step1} is equivalent to
\begin{equation}
    p^{-{\xi_2}}-1\le K(p-1)^{-{\xi_2}+1}+(K-1)(p-1)p^{-{\xi_2}}.\label{step2}
\end{equation}
We can define the function $f(p):=K(p-1)^{-{\xi_2}+1}+(K-1)(p-1)p^{-{\xi_2}}-(p^{-{\xi_2}}-1)$, and observe that \eqref{step2} holds if $f(p)\ge 0$ for all $p\ge 1$. To this end, note that $f(1)=0$, so it suffices to show $f'(p)\ge 0$ for $p\ge 1$. This is equivalent to establishing
\begin{equation*}
    K(p-1)^{-{\xi_2}}+(K-1)p^{-{\xi_2}}\ge \frac{-{\xi_2} K}{-{\xi_2}+1}p^{-{\xi_2}-1}.
\end{equation*}
Dividing both sides by $(K-1)(p^{-\xi_2-1})$, we obtain:
\begin{equation*}
    \frac{K}{K-1}\frac{(p-1)^{-\xi_2}}{p^{-\xi_2-1}}+p\ge \left(\frac{-{\xi_2}}{-{\xi_2}+1}\right)\left(\frac{K}{K-1}\right).
\end{equation*}
Since the first term in the left-hand side of the inequality is positive, it suffices to have
\begin{equation*}
    p\ge \left(\frac{-{\xi_2}}{-{\xi_2}+1}\right)\left(\frac{K}{K-1}\right).
\end{equation*}
for all $p\geq1$. In turn, under the definition of $K$, this condition holds if $K\ge -{\xi_2}+1$, which is true since $\max\left(1,-{\xi_2} 2^{-{\xi_2}-1}\right)\ge -{\xi_2}$ for ${\xi_2}<0$. 

\vspace{0.1cm}
Now, suppose $|y|>|x|>0$. We again proceed from \eqref{step1}, but now we have $0<p<1$, and \eqref{step1} is equivalent to:
 \begin{equation}
    1-p^{-{\xi_2}}\le K(1-p)^{-{\xi_2}+1}+(K-1)(1-p)p^{-{\xi_2}}.
\end{equation}
Then it suffices to show the following for all $p\in(0,1)$:
 \begin{equation}
    1-p^{-{\xi_2}}\le (K-1)(1-p)^{-{\xi_2}+1}+(K-1)(1-p)p^{-{\xi_2}}.
\end{equation}
which leads to
\begin{equation}\label{finalboundlemma}
    K-1\ge \frac{1-p^{-{\xi_2}}}{(1-p)^{-{\xi_2}+1}+(1-p)p^{-{\xi_2}}}:=g(p).
\end{equation}
We claim that $\eqref{finalboundlemma}$ holds under the definition of $K$, since  $\sup_{p\in(0,1)}g(p)\le -{\xi_2}2^{-{\xi_2}-1}$ for ${\xi_2}\le -1$ and $\sup_{p\in (0,1)}g(p)\le 1$ for ${\xi_2}\in (-1,0)$. These estimates can be obtained by rewriting $g$ as:
\begin{equation*}
    g(p)=\dfrac{\left(\dfrac{1-p^{-{\xi_2}}}{1-p}\right)}{(1-p)^{-{\xi_2}}+p^{-{\xi_2}}},
\end{equation*}
followed by maximizing the numerator on $p\in (0,1)$ and minimizing the denominator on $p\in (0,1)$.
\end{proof}

\end{document}